\newcommand{\name}{\textsc{Moonshine}}
\newcolumntype{R}[2]{%
    >{\adjustbox{angle=#1,lap=\width-(#2)}\bgroup}%
    l%
    <{\egroup}%
}
\newcolumntype{C}[1]{>{\centering\arraybackslash}p{#1}}
\newcommand*\rot{\multicolumn{1}{R{25}{1em}}}
\theoremstyle{definition}
\newtheorem{definition}{Definition}[section]
\theoremstyle{theorem}
\newtheorem{theorem}{Theorem}[section]
\theoremstyle{lemma}
\newtheorem{lemma}{Lemma}[section]
\newcommand{\removelatexerror}{\let\@latex@error\@gobble}
\title{{\name}: An Online Randomness Distiller for Zero-Involvement Authentication}
\author{Jack West}
\affiliation{Loyola University Chicago}
\email{jwest1@luc.edu}
\author{Kyuin Lee}
\affiliation{University of Wisconsin--Madison}
\email{kyuin.lee@wisc.edu}
\author{Suman Banerjee}
\affiliation{University of Wisconsin--Madison}
\email{suman@cs.wisc.edu}
\author{Younghyun Kim}
\affiliation{University of Wisconsin--Madison}
\email{younghyun.kim@wisc.edu}
\author{George K. Thiruvathukal}
\affiliation{Loyola University Chicago}
\email{gkt@cs.luc.edu}
\author{Neil Klingensmith}
\affiliation{Loyola University Chicago}
\email{neil@cs.luc.edu}
\begin{abstract}
Context-based authentication is a method for transparently validating another device's legitimacy to join a network based on location.
Devices can pair with one another by continuously harvesting environmental noise to generate a random key with no user involvement.
However, there are gaps in our understanding of the theoretical limitations of environmental noise harvesting, making it difficult for researchers to build efficient algorithms for sampling environmental noise and distilling keys from that noise.
This work explores the information-theoretic capacity of context-based authentication mechanisms to generate random bit strings from environmental noise sources with known properties.
Using only mild assumptions about the source process's characteristics, we demonstrate that commonly-used bit extraction algorithms extract only about 10\% of the available randomness from a source noise process.
We present an efficient algorithm to improve the quality of keys generated by context-based methods and evaluate it on real key extraction hardware.
{\name} is a randomness distiller which is more efficient at extracting bits from an environmental entropy source than existing methods.
Our techniques nearly double the quality of keys as measured by the NIST test suite, producing keys that can be used in real-world authentication scenarios.
\end{abstract}
\keywords{Security and Privacy, Randomness Distillation}
\begin{document}

\maketitle

\section{Introduction}

\emph{Context-based authentication} is emerging as a solution to enable fast and convenient device authentication.
It verifies two devices' coexistence by comparing an independently generated random key or pin from an ambient source of randomness, such as wireless signal strength, acoustic noise, electrical noise, etc.

A pair of devices performing context-based authentication begin by individually harvesting noise from a shared source of randomness (Fig. \ref{fig:pipeline}).
Each device samples the environmental signal with an analog-to-digital converter and divides its sequence of samples into blocks.
Each block of the samples is then converted into a single bit by way of a bit extraction technique.
If the random source has a significant fraction of common-mode noise shared between the two authenticating devices, then the bit sequences they extract are substantially similar.
The underlying assumption of context-based authentication is that only devices which are physically near one another share enough common-mode noise to authenticate.
Distant devices, which are assumed not to be legitimate authenticators, generate significantly different bit sequences from the contextual information that they can observe, so they cannot authenticate with a legitimate device.

The environment around us is full of noise sources, but all noise sources ultimately have the same problem: we need some filter or transformation to translate samples of the noise source into the keyspace elements.
The question we are trying to answer here is how do we efficiently build transformation functions that can produce high-quality keys.

As we will demonstrate, most environmental noise sources generate signals with a low \emph{randomness density}.
Informally, we say that if we harvest a $k$-bit key from an environmental source, that key could be represented with a shorter sequence of $k-n$ bits.
More formally, we say that the Shannon entropy rate of a $k$-bit key generated from an environmental noise source is $H_e(\mathcal{X}) < H_U(\mathcal{X})$, where $H_U(\mathcal{X}) = k$ is the entropy rate of an iid sequence of $k$ uniformly-distributed Bernoulli random variables.

One possible solution to this problem would be to use the bit sequence extracted from the environmental noise source as a seed for a pseudorandom number generator (PRNG), which would produce keys that are indistinguishable from random.
Since PRNGs are deterministic, all devices that start with the same seed would produce the same key.
Although, if we assume that an illegitimate user who wishes to gain access to a network that is secured by context-based authentication knows how the PRNG converts seeds to keys---an assumption we must make---then the keys produced by the PRNG can be of no higher quality than the seeds used to generate them.
Using cryptographic hash functions to randomize a low-entropy bit sequence results in a similar problem.

In this work, we answer two open questions regarding the information theoretic properties of context-based key generation.
First, we ask \emph{what is the maximum amount of randomness we can extract (in bits per second) from some environmental source process given only its power spectral density?}
Second, we ask \emph{how can we increase the randomness density in keys generated by context-based methods?}
In the existing literature, no technique can repeatedly generate shared keys from environmental noise that can pass standard tests for randomness---Table \ref{tab:nist} shows a summary of NIST results for many recent pieces of work that report their results.
Without generating sufficiently random keys, we cannot be sure that we are excluding unauthorized users.

\begin{figure*}
    \centering
    \includegraphics[width=\hsize]{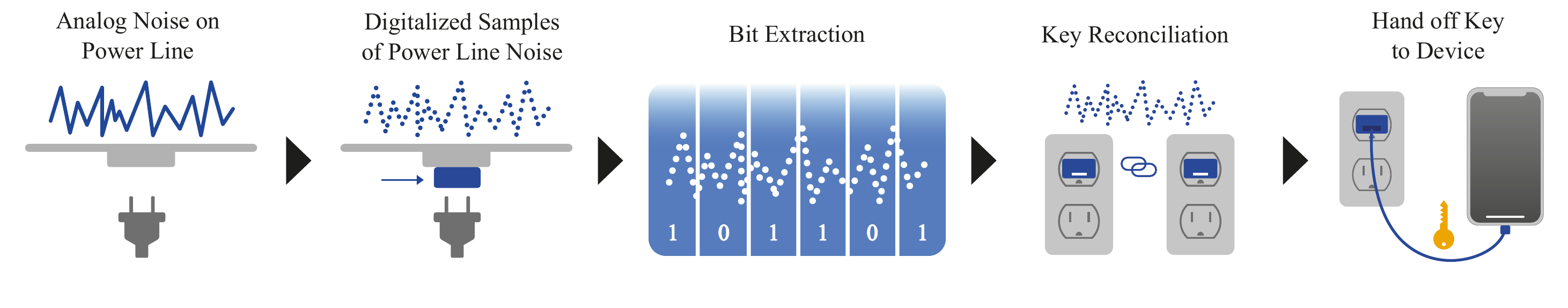}
    \vskip -20pt
    \caption{Data pipeline for \textsc{VoltKey}. {\name} is applied during the bit extraction phase.}
    \vskip -5pt
    \label{fig:pipeline}
\end{figure*}

To address the first problem, we develop a method for estimating the entropy rate---which is an upper bound on the bit extraction rate---of the raw environmental noise signal.
The bit extraction rate matters because the most secure keys are long.
RSA and DSA keys are between 2,048--4,096 bits in length, and the minimum length is getting longer all the time due to improvements in computational capacity available for brute force attacks.
For context-based authentication techniques to be practical, they must be faster than manual techniques like typing in a password.
But current context-based authentication schemes are slow for two reasons.
First, the entropy rate of the environmental noise process---measured in bits per second---limits the speed.
Second, the efficiency of standard key generation algorithms that extract bits from the chosen environmental noise processes tends to be low.
Standard algorithms can extract one key bit every 10--20 samples of the noise process.


To address the second problem, we introduce a new randomness distiller called {\name}.
{\name} distills randomness from a long bit sequence with low-entropy density into a shorter bit sequence with high entropy density that can be used as a secure cryptographic key.
{\name} produces keys with near-optimal entropy rate by selectively discarding subsequences from the input.
By contrast, other randomness correctors have suboptimal performance~\cite{drng}.

The technical challenges of implementing a randomness distiller stem from the fact that random numbers' predictability is difficult to measure.
There are many senses in which a bit sequence may be predictable: its periodicity, propensity to generate a particular bit sequence (even if that sequence's appearance is not periodic), etc.
Furthermore, once we know that a bit sequence is predictable, it is not easy to selectively eliminate its predictable elements.
Our theoretical analysis of environmental entropy sources suggests techniques to deal with these challenges and motivates our design of \textsc{Moonshine}.

This work introduces a novel randomness distiller in the data pipeline shown in Fig. \ref{fig:pipeline} between bit extraction and key reconciliation which selectively removes bits generated by the bit extractor to improve the quality of the final key that is generated.
Our randomness distiller introduces the new technique of discarding blocks of bits from the input sequence in a periodic fashion.
This has the effect of disrupting the periodic repetitions on the input sequence, with the result being more random sequences that do well on the NIST benchmarks.
We demonstrate that this bit discarding technique significantly improves the overall quality of keys while adding only a negligible processing overhead.

We demonstrate the effectiveness of {\name} by evaluating it on real context-based authentication hardware.
The authors of \textsc{VoltKey} lent us prototypes that we used to evaluate {\name}.
The {\name} corrector improves the randomness of bit sequences generated by \textsc{VoltKey} and considerably outperforms the Von Neumann randomness corrector, both in terms of the randomness of the resulting key and the amount of data retained after correction.
Keys generated by {\name} can pass 14/15 NIST tests for randomness, making them suitable for use as cryptographic and authentication keys.
It is important to know the source process's entropy rate because that imposes a hard upper limit on the speed of key extraction.
We might choose to work with one noise process or another, depending on the entropy rate.

Generally speaking, when we are designing a context-based authentication scheme, we know the general properties of the source process's power spectral density.
For example, body-area networks that harvest electrical signals from the heart (H2H~\cite{h2h} and H2B~\cite{Lin-H2B}) to generate authentication keys generally start with an analog source process that has a bandwidth of a few Hertz.
We can assume that its power spectral density will have a few strong harmonics in the 1--10 Hz frequency band superimposed on additive white Gaussian noise.
\textsc{VoltKey}, which harvests randomness from the electric power lines, will likely have strong harmonics at multiples of 60 Hz, tapering off at a few hundred Hertz.
Starting with this information, we show how to estimate the entropy rate of a typical realization of the source process and improve the quality of extracted keys.
\\

\noindent
The contributions of this work are as follows:
\begin{enumerate}
    \item We present a new method for calculating the entropy rate of a random process from its samples or its PSD \emph{that works for any wide-sense stationary random process} (\S \ref{sec:entropyrate} \& \S \ref{sec:fancyentropyrate}).
    \item Using intermediate results from the previous  technique, we introduce {\name}, \emph{an entropy distiller that can achieve a substantially improved pass rate of the NIST test for randomness}.
    \item We preformed a comparative analysis of key quality from different bit streams with different bit extraction algorithms and environmental sources.
    \item We build a prototype implementation of {\name} that runs on real context-based authentication hardware. We evaluate our prototype on several types of environmental noise. Our results show that {\name} is generalizable to different context based authentication mechanisms.
\end{enumerate}

Our methods make some mild assumptions about the characteristics of the environmental source noise process.
We model the source process as the sum of a band-limited deterministic signal and additive white Gaussian noise.

\section{Background}

Devices that use context-based security take advantage of the fact that the common contextual information is shared only by a limited group of closely located devices. 
The presence of common contextual information is evidence that the devices are located in the same place simultaneously, which implies that they legitimately belong to the same user.
The keys generated from contextual information can establish initial trust (as a pairing key) and protect subsequent communication (as a cryptographic key).
This eliminates the need for human involvement for making, entering, and managing a secret key, which can dramatically improve the overall usability of systems that currently rely on passwords to protect data.
In addition, the time-varying nature of contextual information also allows devices to use a new key for each pairing attempt or periodically update the cryptographic key, which significantly reduces the attack window for adversarial agents.

Similar studies \cite{Lin-H2B, proximate, Mayrhofer-Pervcomp07, floppuf}, show that the act of calculating the amount of randomness in an environmental signal is not straightforward.
We cannot just sample the source and create a histogram to compute the statistical entropy because the samples are not independent.
Most of the time, environmental noise contains some deterministic component, causing samples to be correlated.
This deterministic component makes the entropy computation meaningless.
Techniques exist for computing entropy rates from a signal's power spectral density (PSD), but they are unstable for correlated random processes.
Our contribution is to introduce a stable method to calculate the entropy rate from a signal's PSD that works on arbitrary random processes.

\subsection{An Overview of the Standard Context Based Authentication Process}
\label{sec:contextoverview}

This section gives a broad overview of the process devices and goes through to generate random keys from environmental noise.
We assume that there are two devices that are both located near each other and measuring the same random environmental noise.
Most context-based authentication schemes involve three basic steps: noise harvesting, key generation, and reconciliation.

\begin{enumerate}[leftmargin=0cm,itemindent=.5cm,labelwidth=\itemindent,labelsep=0cm,align=left]
    \item \textbf{Noise Harvesting:} In the first step, the device gathers a sequence of samples from an environmental noise process.
    This is usually done by a microcontroller with an analog-to-digital converter.
    These samples are typically filtered to remove undesirable features and time synchronized by sending messages over a public channel.
    \item \textbf{Bit Extraction (the focus of this paper):} Raw samples gathered from the environmental noise process are then converted by a fuzzy extractor into a sequence of bits that will be a key.
    Generally, a few bit errors (1-10\%) between authenticating devices are permitted in the extracted bit sequence.
    The most popular bit extraction technique is to divide the sequence of raw noise samples into bins of 10-20 samples each.
    \item \textbf{Key Reconciliation:} After bit extraction, each device will have a sequence of bits in memory that represents a key.
    Even though the devices are located nearby one another, differences in their measurements of the raw environmental noise will have caused spurious errors in the extracted bit sequences.
    Key reconciliation is the process by which two nearby devices exchange messages with one another over a public channel to resolve those bit differences.
    At the end of this step, if both devices are honest and located in the same vicinity, they will each hold identical authentication keys that can be used to encrypt data or validate their identities to one another.
\end{enumerate}

\paragraph{Fuzzy Extractors}
Context-based authentication relies on the authenticating devices observing nearly identical environmental noise signals to base their keys.
But even two nearby devices may read different values from their respective sensors during an event that creates noise.
If the event is closer to one device than the other, they will observe slightly different noise patterns and generate different keys.
Fuzzy extractors account for inconsistencies in observed environmental noise by examining noise and use a mapping function, which all the devices in the system share, to map the observed noise to a new value.

\paragraph{NIST Test for Randomness}
\label{sec:nist}

The NIST test for randomness is a software suite that evaluates the quality of a random bit sequence.
It consists of 15 separate tests that analyze the bit stream bit-wise, block-wise, and superblock-wise.
{\name}'s goal is to modify a bit sequence to increase the number of NIST tests that it passes.
Passing more NIST tests should be the goal of any random number generator to verify that generated keys are random.
We use the NIST tests for randomness as a benchmark for key quality.

The NIST test suite's input is a bit sequence, typically thousands to hundreds of thousands of bits in length.
For each test in the suite, the input bit sequence is divided into blocks, and each block is evaluated independently.
The output of a typical run of the suite, shown in Table \ref{tab:nist}, lists two crucial figures for each of the 15 tests: a $p$-value, and the proportion of bits, blocks, or superblocks that passed the test.

$p$ is the probability that a true random number generator would have produced a less random output than the given test input sequence.
$p$ values closer to $1$ are better, and $p$ values closer to $0$ are worse.
$C_1, C_2, ... C_{10}$ represent the number of $p$ values that lie in the intervals $[0.0,0.1),$ $[0.1,0.2),$ $...,$ $[0.9,1.0)$.

The \emph{proportion} output gives the fraction of blocks that passed each test in the suite.
We want the proportion of passes to be as close to unity as possible.
In general, the tests in the NIST suite need a minimum of 1000 bits to evaluate the quality of the sequence---shorter sequences cannot be evaluated with high confidence.
Each stream of bits needs to be at least 1000 bits, or our NIST test suite fails internally.
Therefore, if enough bits exist, we divide the stream into 100 blocks of at least 1000 bits in size.
The test suite divides sequences into blocks of 100 bits each and subjects each block to the suite of 15 tests.

\subsection{A Review of the Typical Set and the Asymptotic Equipartition Property}
\label{sec:aep}

\begin{figure}[t]
    \centering
    \includegraphics[width=\hsize]{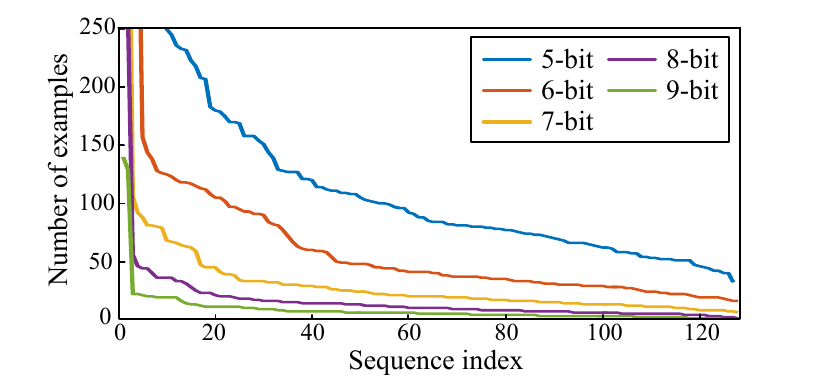}
    \vskip -10pt
    \caption{A sorted histogram of variable-length bit sequences generated from measured environmental noise.}
    \label{fig:sortedhist}
\end{figure}

One of the most significant characteristics of cryptographic keys is that their bit sequences be uniformly distributed---that is that each bit sequence should be equally probable.
If some bit sequences in the key are more likely than others, it would be easy for an attacker to guess the chosen key.
But sources that generate independent and uniformly distributed random numbers are typically challenging to build.
As previous work in context-based authentication has demonstrated, \textbf{most environmental noise does not yield uniformly distributed samples}.
How can we generate uniformly distributed numbers from a nonuniform source?

Suppose we independently sample a nonuniform source many times in succession.
The sequences of samples we obtain from that sampling process can be divided into a typical set and a non-typical set.
The probability that a sequence of independent samples lies in the typical set approaches 1 for sufficiently long sequences.
Furthermore, all sequences in the typical set are almost uniformly distributed.
This result is called the asymptotic equipartition property.
Figure \ref{fig:sortedhist} shows histograms of bit sequences of different lengths, sorted from most probable to least probable.
Bit sequences do not need to be very long before they begin to exhibit this almost-uniform characteristic.

\subsection{R\'{e}nyi Entropy}
\label{sec:renyi}

\begin{definition}{R\'{e}nyi Entropy}
Let $X$ be a random variable with an alphabet $\mathcal{X}$ and distribution $p_X(x)$.
The R\'{e}nyi Entropy of $X$ is defined as

\begin{equation*}
    R(X) = -log_2 \sum_{x \in \mathcal{X}} p_X (x) ^2
\end{equation*}

Bennett et. al. demonstrated~\cite{privacyamp} that the R\'{e}nyi entropy is a lower bound on the number of bits of private information that can be distilled after key reconciliation.
More generally, R\'{e}nyi entropy is a convenient tool for estimating the uncertainty in a random variable.
By Jensen's inequality, we have that  R\'{e}nyi entropy is upper bounded by the Shannon entropy:

\begin{equation*}
    R(X) \leq H(X)
\end{equation*}

\noindent
with equality when $X \sim Unif(\mathcal{X})$.

\end{definition}

\subsection{Related Work}\label{sec:related_works}

The literature that is adjacent to our work can be roughly separated into two categories: context-based authentication systems and cryptography.

\paragraph{Context-Based Authentication}
This body of work includes full-system implementations of context-based authentication systems.
In general, each system includes a mechanism for measuring environmental randomness, extracting bit sequences, and resolving bit errors in the bit sequences to form identical keys.
Many different sources of environmental randomness have been studied.
For body area networks of wearable devices, the H2H~\cite{h2h}, H2B~\cite{Lin-H2B}, and others~\cite{Zhang-ECG,BELKHOUJA2019109} systems measure ECG signals (heartbeat data).
Secret From Muscle~\cite{Yang-EMG} EMG (produced by skeletal muscles) and skin vibration has been used to generate keys between low-cost wearable devices and implantable medical devices.
Context-based authentication systems targeted at stationary IoT devices have used context from audio, humidity, luminosity, visual, and vibration channels~\cite{Miettinen-CCS14,Schurmann-TMC13,Saxena-SP06,Miettinen-DAC18, Shrestha-FCDS14, Lee-WiSec20}.
ProxiMate, Amigo, Ensemble, and others~\cite{proximate,Amigo,Ensemble,Hershey-unconventional,Jana-RSSI} extract entropy from measurements of the radio frequency spectrum.
Because the randomness density of environmental signals tends to be low, context-based authentication systems generally pass half or less than half of the NIST tests.
We have compiled the results of the NIST tests from a subset of the context-based authentication systems in Table \ref{tab:nist}\footnote{Not all context-based authentication systems publish their NIST test results. The results shown in Table \ref{tab:nist} are all the published results that we could find.}.

\paragraph{Cryptography}
In the domain of cryptography and information theory, a lot of work has focused on understanding the information content of signals.
Key reconciliation---a suite of techniques that context-based authentication relies heavily on---was originally developed for exchanging information over quantum communication channels~\cite{cachin97,privacyamp,publicdiscussion}.
Fuzzy extractors~\cite{fuzzy_extractor}, fuzzy vaults~\cite{Fuzzy-Vault}, fuzzy commitment~\cite{fuzzy_commitment}, and fPAKEs~\cite{fpake} are more modern cryptographic techniques that are commonly used in key reconciliation by context-based authentication systems.
Also other authors have built techniques to create a more uniform distribution of biased random number generators~\cite{drng}.

\begin{table*}
\caption{NIST test results for various context-based authentication schemes (\checkmark indicates pass).}
\center
\vskip -15pt
\begin{tabular}{lllllllllllllllll}
 & \rot{\textbf{Pass Rate}} &
\rot{\textbf{Frequency}}                 & \rot{\textbf{Block Frequency}} & \rot{\textbf{Cumul. Sums, Fwd}} &
\rot{\textbf{Cumul. Sums, Rev}}  & \rot{\textbf{Runs}}            & \rot{\textbf{Longest Run}}  &
\rot{\textbf{Rank}}                      & \rot{\textbf{FFT}}             & \rot{\textbf{Non-overlap Template}} &
\rot{\textbf{Overlap Template}}      & \rot{\textbf{Universal}}       & \rot{\textbf{Approx Entropy}} &
\rot{\textbf{Serial}}                    & \rot{\textbf{Serial}}          & \rot{\textbf{Linear Complex}} \\ 
\hline
Jana et al.~\cite{Jana-RSSI} & \multicolumn{1}{|r|}{10/15} & \checkmark & \checkmark  & \checkmark & \checkmark & \checkmark & \checkmark & & \checkmark & & & & \checkmark & \checkmark & \checkmark & \\ \hline
H2B~\cite{Lin-H2B}           & \multicolumn{1}{|r|}{8/15} & \checkmark & \checkmark  & \checkmark &  &  & \checkmark & & \checkmark & \checkmark & & & \checkmark & & & \checkmark \\ \hline
H2H~\cite{h2h}               & \multicolumn{1}{|r|}{8/15} & \checkmark & & & & \checkmark & \checkmark & \checkmark & \checkmark &  & & \checkmark & \checkmark & & & \checkmark \\ \hline
Xi et al. ~\cite{xi16}       & \multicolumn{1}{|r|}{10/15} & \checkmark & \checkmark & \checkmark  & \checkmark & \checkmark & \checkmark &  & \checkmark &  & &  & \checkmark & \checkmark & \checkmark & \\ \hline
Secret from Muscle ~\cite{Yang-EMG} & \multicolumn{1}{|r|}{9/15} &\checkmark & \checkmark & \checkmark  & \checkmark & \checkmark & \checkmark &  & \checkmark &  & &  & \checkmark & \checkmark & & \\ \hline
VoltKey~\cite{voltkey}       & \multicolumn{1}{|r|}{8/15}&\checkmark & \checkmark  & \checkmark & \checkmark  &  &  & \checkmark &  & \checkmark & & & \checkmark & & & \checkmark \\ \hline \hline
\textsc{VoltKey} + Von Neumann corr. & \multicolumn{1}{|r|}{11/15}&\checkmark & \checkmark  & \checkmark & \checkmark  &  &  & \checkmark & \checkmark & \checkmark &  & & \checkmark &\checkmark & \checkmark & \checkmark \\ \hline
\textsc{VoltKey} + {\name}             & \multicolumn{1}{|r|}{14/15}&\checkmark & \checkmark & \checkmark & \checkmark  & \checkmark & \checkmark & \checkmark & \checkmark & \checkmark & \checkmark &  & \checkmark & \checkmark & \checkmark & \checkmark \\ \hline
\end{tabular}
\vskip -10pt
\label{tab:nist}
\end{table*}

\section{Entropy Rate of a Noisy Bandlimited Signal}

This section develops a new technique for measuring the entropy rate of a bandlimited signal from its power spectral density (PSD).
We begin with the Burg Max Entropy Theorem, which compares a signal's PSD and its entropy rate.
But as we will see, Burg's theorem is intractable to compute for signals more than a few samples in length.
We develop a computationally-tractable method for calculating the entropy rate of reasonably-sized signals (several thousand samples long).

We could use the Shannon-Hartley Channel Capacity theorem, which relates the noisy signal's entropy rate to an SNR.
The difficulty with this method is that it is not clear how to interpret the SNR in situations where we are dealing only with noise.
In most context-based authentication scenarios, we want as much uncorrelated noise as possible, and we try to filter out everything else.
We would prefer to have an expression for entropy rate that is a function of the power spectral density of the harvested randomness signal because that does not require us to measure the SNR.

\subsection{System Model}
Suppose we harvest entropy from some environmental signal that is composed of a deterministic part and a random part:

\begin{equation}
    X(t) = D(t) + Z(t)
\end{equation}

Where $D(t)$ is an unknown bandlimited signal deterministic in time and $Z(t)$ is additive white Gaussian noise (AWGN).
$X(t)$, $D(t)$ and $Z(t)$ are assumed to be continuous in time.
This model is used by many context-based authentication schemes, including \textsc{VoltKey}~\cite{voltkey}, H2H and H2B~\cite{h2h,Lin-H2B}.
We can represent $D(t)$ as an expansion in the basis of sinusoids:

\begin{equation*}
    D(t) = \sum_{k = 0}^{N} a_k cos(2 \pi k f t + \Theta)
\end{equation*}

\noindent
Where the $a_k$s are expansion coefficients and $\Theta_k$ is the phase angle, modelled as a uniformly distributed random variable on $[- \pi,\pi]$.

\begin{theorem}
\label{thm:stationarity}
$D(t)$ is a stationary process.

\end{theorem}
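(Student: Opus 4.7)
The cleanest route is to establish \emph{strict} stationarity, which automatically gives the wide-sense stationarity that the subsequent PSD-based entropy rate computation (\S\ref{sec:fancyentropyrate}) will rely on. My first move is a notational one: for Theorem \ref{thm:stationarity} to hold, the phase in each summand must be its own random variable, so I would read the definition of $D(t)$ as
\begin{equation*}
    D(t)=\sum_{k=0}^{N} a_k\cos(2\pi k f t+\Theta_k),
\end{equation*}
with $\Theta_0,\dots,\Theta_N$ jointly independent and each $\Theta_k\sim\mathrm{Unif}[-\pi,\pi]$. The key structural fact I will exploit is that the uniform distribution on a full period is invariant under any deterministic shift modulo $2\pi$.

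The argument then has two steps. For an arbitrary shift $s\in\mathbb{R}$, absorb the time translation into the phase:
\begin{equation*}
    D(t+s)=\sum_{k=0}^{N}a_k\cos\bigl(2\pi k f t+\Theta_k'\bigr),\qquad \Theta_k':=\Theta_k+2\pi k f s\pmod{2\pi}.
\end{equation*}
Since each $\Theta_k$ is uniform on a full period and the family $\{\Theta_k\}$ is independent, the joint law of $(\Theta_0',\dots,\Theta_N')$ equals that of $(\Theta_0,\dots,\Theta_N)$. Therefore, for any finite time set $\{t_1,\dots,t_n\}$, the vector $(D(t_1+s),\dots,D(t_n+s))$ has the same distribution as $(D(t_1),\dots,D(t_n))$, which is strict stationarity. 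As a sanity check and to expose the lag-only form of the covariance that the later analysis needs, I would also compute the mean and autocorrelation directly: $E[D(t)]=0$ because $E[\cos(2\pi k f t+\Theta_k)]=0$, and a product-to-sum expansion combined with independence of the $\Theta_k$ yields
\begin{equation*}
    R_D(t_1,t_2)=\sum_{k=0}^{N}\frac{a_k^2}{2}\cos\bigl(2\pi k f(t_1-t_2)\bigr),
\end{equation*}
which depends only on $\tau=t_1-t_2$.

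The only genuine obstacle is the modeling ambiguity flagged in the first paragraph: if $\Theta$ were a single shared phase, the $k\neq j$ cross terms in $R_D(t_1,t_2)$ would contribute $\cos(2\pi f(kt_1-jt_2))$, which does \emph{not} depend on $t_1-t_2$ alone, and the process would fail to be (even wide-sense) stationary. I would therefore open the proof by making the independence-of-phases assumption explicit; once that is in place, every remaining step is a routine consequence of the rotational invariance of the uniform distribution on the circle.
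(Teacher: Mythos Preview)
Your proposal is correct, and in fact cleaner than the paper's own argument. The paper proceeds differently: it computes the autocorrelation $E[D(t_1)D(t_2)]$ directly for the case $N=2$, expands the product into four terms, and argues that the two cross terms vanish, leaving $\tfrac{a_1^2}{2}\cos(2\pi f(t_1-t_2))+\tfrac{a_2^2}{2}\cos(4\pi f(t_1-t_2))$, which depends only on the lag. This verifies only wide-sense stationarity and only by example; the paper then asserts the same reasoning extends to general $N$.

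What you do differently---and what it buys---is twofold. First, by exploiting rotational invariance of the uniform phase you obtain \emph{strict} stationarity in one line for arbitrary $N$, with wide-sense stationarity as an immediate corollary. Second, and more importantly, you correctly flag the modeling issue that the paper's proof glosses over: in the appendix the paper writes a \emph{single} shared phase $\Theta$ in every summand, yet claims the cross terms vanish ``because the expectation is the inner product of two orthogonal sinusoids.'' That justification is shaky---the expectation is over $\Theta$, not a time integral---and with a shared phase the cross term $E[\cos(2\pi f t_1+\Theta)\cos(4\pi f t_2+\Theta)]$ actually leaves the residue $\tfrac{1}{2}\cos(2\pi f t_1-4\pi f t_2)$, which does not depend on $t_1-t_2$ alone. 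Your insistence on independent $\Theta_k$'s is exactly what is needed to make the theorem true, and your proof makes the mechanism transparent: cross terms vanish by independence and zero mean of each factor, not by orthogonality of sinusoids.
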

\begin{proof}
See Appendix \ref{thm:stationarityproof}
\end{proof}

The autocorrelation function is the same for all time shifts, meaning that its value is only dependent on the lag $t_1 - t_2$, not on the absolute time $t_1$ or $t_2$.
We have verified that the ACF depends only on time shift for functions of the same form with (a) more than two terms and (b) various combinations of coefficients on each term.
This is the criterion for stationarity.

We model the entropy source $X(t)$ as a stochastic process that can be sampled discretely in time, yielding a sequence $\{X_m\}$.
In this paper, \textbf{$\{X_m\}$ is the sequence of ADC samples of the environmental noise process}.

\subsection{Computing the Entropy Rate from the PSD}

Our goal is to get a bound on the source process's entropy rate $\{X_i\}$, which can be directly computed from the process's power spectral density.
We want an inequality similar to the Shannon-Hartley channel capacity bound but where the signal bandwidth and SNR are directly computed from the properties of the source process's PSD.

Assuming that the entropy source is a wide-sense stationary (WSS) stochastic process (as we demonstrated in Theorem \ref{thm:stationarity}), we can compute its autocorrelation function by taking the inverse Fourier transform of the PSD~\cite{gubner}: $R_{XX}(\tau) = \mathcal{F}^{-1}(S(f))$\footnote{This is called the Wiener–Khinchin theorem.}:

\begin{theorem}{Burg's Maximum Entropy Theorem}

In general, the maximum entropy rate stochastic process $\{X_i\}$ satisfying the constraints

\begin{equation}
\label{eqn:burgconditions}
    R_{XX}(k) = E \left[ X_m X_{m+k} \right] = \alpha_k, \text{~~} k = 0, 1, 2, ..., p
\end{equation}
 is the $p$th-order Gauss-Markov process of the form
 
\begin{equation}
    X_m = -\sum_{k = 1}^p a_k X_{m-k} + Z_m
\end{equation}
\end{theorem}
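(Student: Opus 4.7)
The plan is to prove Burg's theorem via a standard three-step maximum-entropy argument: reduce the entropy rate to a single conditional entropy via the chain rule, upper-bound that conditional entropy using the Gaussian maximum-entropy property and the Markov structure, and then exhibit a specific $p$th-order AR process that attains the bound while matching the autocorrelation constraints.

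First I would use stationarity to write the entropy rate as $h(\mathcal{X}) = \lim_{n \to \infty} h(X_n \mid X_{n-1}, \ldots, X_1)$. Since conditioning can only reduce differential entropy, dropping all but the most recent $p$ samples yields
\begin{equation*}
    h(X_n \mid X_{n-1}, \ldots, X_1) \leq h(X_n \mid X_{n-1}, \ldots, X_{n-p}).
\end{equation*}
The right-hand side depends only on the joint distribution of the $(p+1)$-tuple $(X_n, X_{n-1}, \ldots, X_{n-p})$, whose second-order moments are completely determined by the constraints $R_{XX}(k) = \alpha_k$ for $0 \leq k \leq p$. Applying the classical fact that the Gaussian distribution maximizes differential entropy subject to a fixed covariance, the conditional entropy above is in turn upper-bounded by the corresponding conditional entropy of a jointly Gaussian vector with the prescribed covariances, giving an explicit upper bound on $h(\mathcal{X})$ that depends only on $\alpha_0, \ldots, \alpha_p$.

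Finally I would construct the AR($p$) process $X_m = -\sum_{k=1}^p a_k X_{m-k} + Z_m$ with i.i.d. zero-mean Gaussian innovations $Z_m$ and show it achieves both inequalities with equality: the Markov-of-order-$p$ structure tightens the chain-rule bound, and the Gaussian innovations tighten the maximum-entropy bound. The step that requires the most care---and the main obstacle---is verifying that the coefficients $a_k$ and the variance $\sigma_Z^2$ can be chosen so that the stationary AR process actually realizes the prescribed autocorrelations $\alpha_0, \ldots, \alpha_p$. This is handled by the Yule--Walker equations: the $a_k$ solve a linear system whose matrix is the Toeplitz matrix built from $\alpha_0, \ldots, \alpha_{p-1}$, and a unique stable solution admitting a wide-sense stationary realization exists whenever this Toeplitz matrix is positive definite, a mild regularity condition equivalent to $\alpha_0, \ldots, \alpha_p$ being a valid autocorrelation sequence. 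Once the $a_k$ are fixed, $\sigma_Z^2$ is determined by $\alpha_0$, and the higher-lag autocorrelations extend consistently via the Yule--Walker recursion itself, completing the argument that the AR($p$) process saturates the entropy-rate bound.
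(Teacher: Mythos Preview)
Your proposal is correct and is precisely the standard argument given in Cover and Thomas, which is exactly what the paper cites for its proof (the paper's own ``proof'' is simply ``See \cite{coverthomas}''). There is nothing to add: the chain-rule reduction, the two inequalities from conditioning and Gaussianity, and the Yule--Walker construction of the attaining AR($p$) process are all in order.
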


\begin{proof}
See \cite{coverthomas}.
\end{proof}
In other words, each new sample $X_m$ is a linear combination of the previous $p$ samples plus some iid additive white Gaussian noise $Z_m \sim \mathcal{N}(0, \sigma ^2)$.
The $a_k$s are chosen to satisfy Equation \ref{eqn:burgconditions}.
We can use the Yule-Walker equations to calculate the $a_k$s from $\sigma$ and the $\alpha _k$s, the values of the autocorrelation function of the source process $\{X_i\}$~\cite{coverthomas}.

Burg's maximum entropy theorem holds \emph{without} assuming that $\{X_m\}$ is broad sense stationary.
However, to compute the $\alpha _k$s from the PSD, we need the source process $D(t)$ to be wide sense stationary.
We can write each $X_{m-k}$ as an expansion in the basis of complex sinusoids:

\begin{equation}
\label{eqn:maxentropy}
    X_m = -\sum_{k = 1}^p \sum_{l=0}^N a_k b_l e^{-i 2 \pi l f m/N} + Z_m
\end{equation}

In Theorem \ref{thm:stationarity}, we demonstrated that the source process of interest is stationary under some reasonable assumptions, and therefore it satisfies the conditions in Equation \ref{eqn:burgconditions}.
In fact, the $\alpha _k$s from Equation \ref{eqn:burgconditions} can be directly read from the autocorrelation function of the source process, which is the inverse Fourier transform of the PSD.

\subsection{Computing R\'{e}nyi Entropy Rate}
\label{sec:entropyrate}

Plugging in our model for $X_m$ from Equation \ref{eqn:maxentropy} into the expression for R\'{e}nyi Entropy:

\begin{align*}
    R(X_m) &= -\log_2 E \left[ p_X(x)^2 \right] \\
           &= -\log_2 \sum_{x \in \mathcal{X}} Pr \left[ -\left( \sum_{k=1}^p \sum_{l=0}^N a_k b_l e^\frac{-i 2 \pi l f m}{N} + Z_m \right) = x\right]^2
\end{align*}

Where in the second step, we plug in the discrete Fourier transform representation of our signal.
We will now modify the limits on the double sum.
First, we will set $p=N$, so that the Markov process's order is a complete sampling window.
Second, we will drop the $l = 0$ term in the DFT, assuming no DC offset, that the sample process has zero mean.
This allows us to combine the double sum into one sum that runs from 1 to $N$.

\begin{align*}
    R(X_m) &= -\log_2 \sum_{x  = 0}^{2^b-1} Pr \left[ -\left( \sum_{k=1}^N a_k b_k e^\frac{-i 2 \pi l f m}{N} + Z_m \right) = x\right]^2
\end{align*}

Here, the $a_k = \sqrt{\alpha_k}$, the samples of the PSD of the deterministic signal.
The outer sum over $x$ runs from 0 to $2^b-1$ because we are assuming that the signal is being acquired with a $b$-bit analog to digital converter.

\subsubsection{If Deterministic Signal is 0}

Suppose the deterministic component of the signal $D(t)$ is zero, and the only source of entropy is $Z_m$, the additive white Gaussian noise, which we assume is correlated among two or more context-based authenticators.
For example, this would be the case in the H2H body area network when there is no ECG signal on the skin, or in VoltKey when there is no 120VAC power waveform.
The only noise present is caused by cosmic radiation.

This is a slightly unrealistic assumption, but it allows us to comprehend the amount of entropy carried by the correlated random noise, whose statistical properties we know.
In this calculation, we want to find the amount of entropy carried by the AWGN, not the amount of shared entropy---called mutual information---common to both devices.

\begin{theorem}{}
\label{thm:renyibound}
The R\'{e}nyi entropy of the samples of the source process $\{X_i\}$ is lower bounded by the R\'{e}nyi entropy of the AWGN $Z_m$. In other words, $R(X) \geq R(Z)$.
After observing the deterministic component $D$ of the source process, the remaining uncertainty in $X$ is due to $Z$, the AWGN.

\end{theorem}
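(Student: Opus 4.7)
The plan is to exploit the decomposition $X_m = D_m + Z_m$ in which the AWGN $Z_m$ is statistically independent of $D_m$ (which depends only on the sinusoidal coefficients $a_k$ and the uniform random phases $\Theta_k$). Independence immediately yields the convolution identity $p_{X_m}(x) = E_{D_m}[p_{Z_m}(x - D_m)]$. Since the Rényi entropy of order two satisfies $R(X_m) = -\log_2 \sum_x p_{X_m}(x)^2$, a lower bound on $R(X_m)$ is equivalent to an upper bound on the collision probability $\sum_x p_{X_m}(x)^2$, and that is the quantity I would target directly.

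The central step would be an application of Jensen's inequality to the convex map $u \mapsto u^2$, giving $p_{X_m}(x)^2 = \bigl(E_{D_m}[p_{Z_m}(x - D_m)]\bigr)^2 \leq E_{D_m}\bigl[p_{Z_m}(x - D_m)^2\bigr]$. Summing over $x$, interchanging the sum and expectation (everything is non-negative, so Fubini applies), and invoking translation invariance of the $\ell^2$ norm yields
$$\sum_x p_{X_m}(x)^2 \;\leq\; E_{D_m}\!\left[\sum_x p_{Z_m}(x - D_m)^2\right] \;=\; \sum_y p_{Z_m}(y)^2.$$
Taking $-\log_2$ of both sides flips the inequality and produces $R(X_m) \geq R(Z_m)$. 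Equivalently, this is Young's convolution inequality $\|p_{D_m} * p_{Z_m}\|_2 \leq \|p_{D_m}\|_1 \|p_{Z_m}\|_2 = \|p_{Z_m}\|_2$ specialized to the discrete densities at play, but the Jensen phrasing exposes the mechanism most clearly.

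The second, more interpretive sentence of the theorem is the observation that $X_m \mid (D_m = d)$ is distributed as $Z_m + d$, and Rényi entropy is translation-invariant, so $R(X_m \mid D_m = d) = R(Z_m)$ for every realization $d$. Read this way, the bound $R(X_m) \geq R(Z_m)$ is a Rényi-2 analog of the familiar Shannon principle that conditioning reduces entropy. The main obstacle is precisely that Rényi entropy does \emph{not} enjoy a universal conditioning-reduces-entropy inequality for arbitrary definitions of conditional Rényi entropy, so one cannot simply quote such a fact. The Jensen/convolution argument sidesteps this subtlety by operating directly on the collision sum, where convexity of $u^2$ supplies the inequality without any conditional-Rényi machinery. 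A minor point worth verifying is the independence of $D_m$ and $Z_m$ in the paper's system model, which follows from $Z_m$ being externally generated AWGN uncorrelated with the sinusoidal coefficients and random phases that determine $D_m$.
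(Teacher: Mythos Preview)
Your argument is correct. The paper instead goes through a conditional R\'enyi entropy, defining $R(X\mid D)=\sum_d P_D(d)\,R(X\mid D=d)$, showing $R(X\mid D)=R(Z\mid D)$ by the substitution $z'=x-d$, and then chaining $R(X)\ge R(X\mid D)=R(Z\mid D)=R(Z)$ with an appeal to ``conditioning reduces uncertainty.'' Your caution about that step is well placed: for the weighted-average definition the paper uses, $R(X)\ge R(X\mid D)$ is \emph{not} valid in general (a rare value of $D$ with very large $R(X\mid D=d)$ can push the average above $R(X)$). The paper's chain survives here only because $R(X\mid D=d)=R(Z)$ is constant in $d$, so the Jensen bound $2^{-R(X)}\le E_D\bigl[2^{-R(X\mid D=d)}\bigr]$ collapses directly to $2^{-R(X)}\le 2^{-R(Z)}$, which is exactly your collision-probability inequality. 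The two proofs therefore share the same convexity core; your direct route via Jensen on the collision sum (equivalently Young's inequality $\|p_D * p_Z\|_2 \le \|p_Z\|_2$) is simply cleaner and does not lean on a conditional-R\'enyi monotonicity that is not available in general.
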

\begin{proof}
See Appendix \ref{sec:renyisumproof}.
\end{proof}

\begin{theorem}
\label{thm:entropyrate}
Let $Z_m$ be a single sample of the analog source process $\{X_i\}$ consisting only of additive white Gaussian noise.
Then its  R\'{e}nyi entropy is $R(Z_m) \approx \log_2 2 \sigma \sqrt{\pi}$.

\end{theorem}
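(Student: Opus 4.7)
The plan is to evaluate the Rényi entropy formula $R(Z_m) = -\log_2 \sum_x p_Z(x)^2$ directly on the Gaussian density, approximating the discrete sum over ADC outputs by a continuous integral. This is justified when the quantization step $\Delta$ of the $b$-bit ADC is small compared to the standard deviation $\sigma$: the Riemann sum $\sum_x p_Z(x)^2 \Delta$ converges to $\int p_Z(x)^2\,dx$, and since $Z_m$ has zero mean and the Gaussian tails are negligible outside the ADC range, the boundary effects of the finite alphabet $\{0,\ldots,2^b-1\}$ can be ignored.

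The main calculation is then a routine Gaussian integral. I would substitute
\begin{equation*}
p_Z(x) = \frac{1}{\sigma\sqrt{2\pi}}\exp\!\left(-\frac{x^2}{2\sigma^2}\right),
\end{equation*}
square it to obtain
\begin{equation*}
p_Z(x)^2 = \frac{1}{2\pi\sigma^2}\exp\!\left(-\frac{x^2}{\sigma^2}\right),
\end{equation*}
and recognize the right-hand side as proportional to a Gaussian density with variance $\sigma^2/2$. Integrating over $\mathbb{R}$ yields
\begin{equation*}
\int_{-\infty}^{\infty} p_Z(x)^2\,dx = \frac{1}{2\pi\sigma^2}\cdot \sigma\sqrt{\pi} = \frac{1}{2\sigma\sqrt{\pi}}.
\end{equation*}
Taking $-\log_2$ gives $R(Z_m) \approx \log_2\bigl(2\sigma\sqrt{\pi}\bigr)$, which is the claimed result.

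The step that most deserves care, and which I would flag as the main obstacle, is the passage from the discrete Rényi entropy defined in Section~\ref{sec:renyi} to the continuous Gaussian integral. Strictly speaking, for a discretized variable the collision probability is $\sum_x p_Z(x)^2$, and replacing it by $\int p_Z(x)^2\,dx$ requires either a quantization-step absorption (so that $\sum_x p_Z(x)^2 \approx \Delta^{-1}\!\int p_Z(x)^2\,dx$, absorbing $\log_2 \Delta$ into an additive constant) or an appeal to the differential analogue of Rényi entropy. I would make this precise by noting that $\Delta \ll \sigma$ for typical ADC resolutions used in context-based authentication, so the approximation error in the Riemann sum is $O(\Delta^2)$ and contributes a lower-order term that is absorbed into the ``$\approx$'' in the statement. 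With this understood, the remaining steps are the elementary integral above.
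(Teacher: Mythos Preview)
Your proposal is correct and follows essentially the same approach as the paper: start from the discrete R\'{e}nyi entropy over ADC bins, replace $p_Z(x)^2$ by the squared Gaussian density $\frac{1}{2\pi\sigma^2}e^{-x^2/\sigma^2}$, approximate the finite sum by the Gaussian integral over $\mathbb{R}$, and take $-\log_2$ of the result. If anything, you are more careful than the paper about the passage from the discrete sum to the continuous integral; the paper simply writes the chain of approximations without isolating the role of the quantization step $\Delta$.
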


\begin{proof}

\begin{align*}
    R(X_m) &= -\log_2 \sum_{k=-2^{b-1}}^{2^{b-1}} Pr \left[ b_k \leq Z_m \leq b_{k+1} \right]^2 \\
           & \approx -\log_2 \sum_{k=-2^{b-1}}^{2^{b-1}}   \frac{1}{2 \pi \sigma ^2} e^{-k^2/\sigma ^2}  \\
           & \approx - \log_2 \int_{- \infty}^{+ \infty} \frac{1}{2 \pi \sigma ^2} e^{-k^2/\sigma ^2}   \mathrm{d}k\\
           & \approx \log_2 2 \sigma \sqrt{\pi}
\end{align*}
\end{proof}

Here we assume that the analog AWGN signal will be quantized by an analog-to-digital converter with $b$ bits of precision.
The sum over the alphabet $\mathcal{X}$ is over all of the $2^b$ possible quantizations that can be produced by a $b$-bit ADC.
In the last approximation, we are assuming that $\sigma$ is small, and we are capturing most of the probability of $Z_m$ within the limits of the ADC's dynamic range.

In a typical application where we are intentionally amplifying correlated random noise $Z_m$, we can expect $\sigma$ to be on the order of $1/10$th the dynamic range of our ADC.
For reasonable values of $\sigma$, the R\'{e}nyi entropy is limited to about 10--12 bits per sample.

This is a lower bound on the Shannon entropy of $X(t)$ for two reasons.
First, as discussed in \S \ref{sec:renyi}, $R(X) \leq H(X)$ for all random variables $X$.
Second, our computations in this section assume that the deterministic component of $X(t)$ is zero.
Adding a nonzero deterministic component will increase the entropy per sample (for proof of this claim, see Appendix \ref{sec:renyisumproof}).

Still, the R\'{e}nyi entropy rate of $Z_m$ is relatively high, even if we ignore the deterministic component of the signal.
Standard key extraction techniques used by context-based authentication mechanisms are only able to extract one bit of entropy per $\sim10$ ADC samples: a bit extraction rate about two orders of magnitude lower than what we would expect to be carried by AWGN noise process alone!

\subsubsection{If Deterministic Signal is Nonzero}
\label{sec:detnonzero}

In the case where $p$th order Gauss-Markov process, it is possible to compute the entropy rate without the Yule-Walker equations.

\begin{align}
    H(\mathcal{X}) &= H(X_p | X_{p-1}, ..., X_0) \\
    &= H(X_0, ..., X_{p}) - H(X_0, ..., X_{p-1}) \\
    &= \frac{1}{2}\log_2(2 \pi e)^{p+1} |K_p| -  \frac{1}{2}\log_2(2 \pi e)^{p} |K_{p-1}| \\
    &=  \frac{1}{2}\log_2 \left((2 \pi e) \frac{|K_p|}{|K_{p-1}|}\right)
    \label{eqn:entropyrate}
\end{align}

where $K_p$ is the autocorrelation matrix of the process $\{X_m\}$.
\noindent
The $K_p$ autocorrelation matrix is called a Toeplitz matrix.
It is rank $N$.
Its top row is the individual values of the autocorrelation function of the source process $\{X_k\}$.
Using the fact that

\begin{equation*}
    E[X_k X_l] = R_{XX}(k-l) = R_{XX}(l-k)
\end{equation*}
for wide-sense stationary processes:

\[
=
\begin{bmatrix}
    R_{XX}(0) & R_{XX}(1) & R_{XX}(2) & \dots & R_{XX}(N-1) \\
    R_{XX}(1) & R_{XX}(0) & R_{XX}(1) & \dots & R_{XX}(N-1) \\
    R_{XX}(2) & R_{XX}(1) & R_{XX}(0) & \dots & R_{XX}(N-2) \\
    \hdotsfor{5} \\
    R_{XX}(N-1) & R_{XX}(N-2) & R_{XX}(N-3) & \dots & R_{XX}(0)
\end{bmatrix}
\]

Equation \ref{eqn:entropyrate} gives the entropy rate in terms of the source process's autocorrelation function.
Unfortunately, Equation \ref{eqn:entropyrate} presents a fairly serious problem: the above result cannot be computed directly for large values of N because the determinant of the autocorrelation matrix approaches zero as N increases.
Computing the determinant of large matrices is a problem in general and not a pathology that is isolated to our autocorrelation matrix.
We discuss below a method for computing the ratio $|K_p|/|K_{p-1}|$ without directly computing the determinants.

\subsection{Calculating Entropy Rate without Directly Computing Determinants}
\label{sec:fancyentropyrate}

The problem that we encounter when attempting to compute an entropy rate from Equation \ref{eqn:entropyrate} is that the columns of the autocorrelation matrix $K_p$ are almost linearly dependent.
This makes the determinants of $K_p$ and $K_{p-1}$ close to but not exactly zero, and the computer's floating point representation rounds the result of the determinant computation to zero.
Our goal here is to compute the \emph{ratio} of the determinants, so we are not concerned about the fact that they are individaully small.

\begin{lemma}
\label{lemma:uppertriangulardeterminant}
If $R$ is an $N \times N$ triangular matrix, then the determinant of $R$ is the product of its diagonal elements:
\begin{equation*}
    |R| = \prod_{i=1}^N r_{ii}
\end{equation*}
\end{lemma}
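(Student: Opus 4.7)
The plan is to handle the upper-triangular case directly; the lower-triangular case follows either by an identical argument applied to rows instead of columns, or by invoking the fact that transposition preserves both triangularity (flipping upper to lower) and determinant. So it suffices to prove the claim under the assumption that $R$ is upper triangular, i.e.\ $r_{ij} = 0$ whenever $i > j$.

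My preferred route is induction on $N$, using Laplace (cofactor) expansion along the first column. The base case $N=1$ is immediate, since a $1\times 1$ matrix has determinant equal to its single entry $r_{11}$. For the inductive step, I would note that in an upper triangular matrix the first column has $r_{11}$ as its only possibly nonzero entry, because $r_{i1} = 0$ for all $i > 1$. Expanding $|R|$ along this column therefore collapses to a single term,
\begin{equation*}
|R| \;=\; r_{11}\,\bigl|R'\bigr|,
\end{equation*}
where $R'$ is the $(N-1)\times(N-1)$ minor obtained by deleting the first row and first column. The key observation is that $R'$ is itself upper triangular (striking the first row and column of an upper triangular matrix leaves an upper triangular matrix, with diagonal entries $r_{22}, r_{33}, \ldots, r_{NN}$), so the inductive hypothesis gives $|R'| = \prod_{i=2}^{N} r_{ii}$, and multiplying through by $r_{11}$ finishes the step.

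As a sanity check one could instead use the Leibniz formula $|R| = \sum_{\sigma \in S_N} \mathrm{sgn}(\sigma) \prod_{i=1}^{N} r_{i,\sigma(i)}$ and argue that every non-identity permutation $\sigma$ sends some index $i$ to $\sigma(i) < i$ (otherwise $\sigma$ would be weakly, hence strictly, increasing and therefore the identity), forcing $r_{i,\sigma(i)} = 0$ and killing that term; only $\sigma = \mathrm{id}$ survives, contributing $\prod_{i} r_{ii}$. This gives the same conclusion without induction.

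There is not really a deep obstacle here; the lemma is a classical fact and each of the two approaches above is essentially mechanical once the right expansion is chosen. The only thing to be careful about is (a) stating clearly that the argument is being carried out for upper triangular matrices and explaining why the lower triangular case reduces to it, and (b) in the inductive approach, verifying that the minor $R'$ really inherits upper triangularity so that the inductive hypothesis applies.
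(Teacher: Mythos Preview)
Your proof is correct and complete; both the inductive cofactor expansion and the Leibniz-formula argument are standard, rigorous routes to this classical identity. The paper, however, does not actually prove this lemma at all---it is stated as a known fact and used immediately without justification. So you have supplied more detail than the authors did, and there is nothing to compare against.
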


$|K_p|$ and $|R|$ are close to zero because all of its diagonal elements are less than 1, causing the product in Lemma \ref{lemma:uppertriangulardeterminant} to approach zero for large $N$.
To be clear, $R$ and $K_p$  are both full-rank matrices.

\begin{lemma}
If $\mathbf{A} = \mathbf{Q}_1\mathbf{R}_1 = \mathbf{Q}_2\mathbf{R}_2$ are two QR decompositions of full rank square matrix $\mathbf{A}$, then

\begin{equation*}
\begin{split}
\mathbf{Q}_2 = \mathbf{Q}_1 \mathbf{S}
\\
\mathbf{R}_2 = \mathbf{S} \mathbf{R}_1
\end{split}
\end{equation*}

\noindent
for some square diagonal $\mathbf{S}$ with entries $\pm 1$.
If we require the diagonal elements of $\mathbf{R}$ to be positive, then the factorization is unique.
\end{lemma}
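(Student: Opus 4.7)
The plan is to show that the change-of-basis matrix $\mathbf{S} := \mathbf{Q}_1^T \mathbf{Q}_2$ is simultaneously orthogonal and upper triangular, and that these two constraints together force it to be a diagonal sign matrix. Since $\mathbf{A}$ is full rank, the triangular factors $\mathbf{R}_1, \mathbf{R}_2$ are invertible, so all the manipulations below are legal.

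First I would equate the two decompositions and multiply by $\mathbf{Q}_1^T$ on the left, obtaining $\mathbf{R}_1 = \mathbf{Q}_1^T \mathbf{Q}_2 \mathbf{R}_2 = \mathbf{S} \mathbf{R}_2$ together with $\mathbf{Q}_2 = \mathbf{Q}_1 \mathbf{S}$. The first identity rewrites as $\mathbf{S} = \mathbf{R}_1 \mathbf{R}_2^{-1}$, which is upper triangular because products and inverses of upper triangular matrices are upper triangular. Meanwhile, $\mathbf{S} = \mathbf{Q}_1^T \mathbf{Q}_2$ is orthogonal as a product of orthogonal matrices. So $\mathbf{S}$ is upper triangular and orthogonal.

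Next I would argue column-by-column from $\mathbf{S}^T \mathbf{S} = \mathbf{I}$. Because $\mathbf{S}$ is upper triangular, column 1 has only $s_{11}$ potentially nonzero, so the $(1,1)$ entry of $\mathbf{S}^T \mathbf{S}$ is $s_{11}^2 = 1$, forcing $s_{11} = \pm 1$. The $(1,2)$ entry reduces to $s_{11} s_{12} = 0$, so $s_{12} = 0$. Induction on the column index closes the loop: once the first $j-1$ columns of $\mathbf{S}$ have been shown to equal $\pm \mathbf{e}_k$, the orthogonality of column $j$ with them forces its above-diagonal entries to vanish and $s_{jj} = \pm 1$. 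Hence $\mathbf{S}$ is diagonal with $\pm 1$ entries, so $\mathbf{S}^2 = \mathbf{I}$, and the relation $\mathbf{R}_1 = \mathbf{S} \mathbf{R}_2$ rearranges to $\mathbf{R}_2 = \mathbf{S} \mathbf{R}_1$, as claimed.

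For uniqueness under the positivity convention, reading the diagonal of $\mathbf{R}_1 = \mathbf{S} \mathbf{R}_2$ yields $(\mathbf{R}_1)_{ii} = s_{ii} (\mathbf{R}_2)_{ii}$. If both factorizations have strictly positive diagonals, the signs force $s_{ii} = +1$ for every $i$, so $\mathbf{S} = \mathbf{I}$ and the two factorizations coincide. The only mild obstacle in the argument is organizing the column-by-column induction cleanly; the rest is mechanical once one notices that the orthogonal-and-triangular pincer collapses $\mathbf{S}$ to a diagonal sign matrix.
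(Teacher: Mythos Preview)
Your proposal is correct and follows essentially the same approach as the paper: define $\mathbf{S}$ as a product of the orthogonal factors, observe from the other representation that it is also upper triangular, and conclude it must be a diagonal sign matrix. The only real difference is that you spell out the column-by-column induction showing why an orthogonal upper-triangular matrix is a diagonal sign matrix, whereas the paper simply asserts this in one line; your version is tidier and your choice $\mathbf{S} = \mathbf{Q}_1^T\mathbf{Q}_2$ aligns directly with the relation $\mathbf{Q}_2 = \mathbf{Q}_1\mathbf{S}$ in the statement.
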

\begin{proof}

Starting with the factorization $\mathbf{A} = \mathbf{Q}_1\mathbf{R}_1 = \mathbf{Q}_2\mathbf{R}_2$ we can define a new matrix $\mathbf{S}$ in the following way:

\begin{equation*}
\mathbf{S} = \mathbf{Q}_2^T \mathbf{Q}_1 = \mathbf{R}_2 \mathbf{R}_1^{-1}
\end{equation*}

Since $\mathbf{Q}_1$ and $\mathbf{Q}_2$ are unitary, then $\mathbf{S}$ must also be unitary.
Since $\mathbf{R}_1$ and $\mathbf{R}_2^{-1}$ are upper triangular, then $\mathbf{S}$ must also be upper triangular.
This means that $\mathbf{S}$ is a diagonal matrix with elements $\pm 1$.

\begin{equation*}
\mathbf{Q}_1\mathbf{R}_1 = \mathbf{Q}_2\mathbf{R}_2
\end{equation*}

\begin{equation*}
\mathbf{Q}_1\mathbf{R}_1\mathbf{R}_2^{-1} = \mathbf{Q}_1\mathbf{S} = \mathbf{Q}_2
\end{equation*}

Now, apply the constraint that $\mathbf{R}_1$ and $\mathbf{R}_2$ have positive entries on their diagonals, which forces $\mathbf{S} = \mathbf{I}$ and $\mathbf{R}_1 = \mathbf{R}_2$.

\end{proof}

\begin{theorem}

The ratio $|\mathbf{K}_p| / |\mathbf{K}_{p-1}| = r_{pp}$, the lower rightmost element of $\mathbf{R}$ in the QR factorization of $\mathbf{K}_p$.
\end{theorem}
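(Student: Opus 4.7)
The plan is to combine the uniqueness of the QR factorization just established with the triangular determinant formula from Lemma \ref{lemma:uppertriangulardeterminant}, so that the ratio of determinants collapses to a single entry of $R$. The strategy is to build the QR factorization of $K_p$ by extending that of $K_{p-1}$, thereby arranging that the first $p-1$ diagonal entries of $R$ coincide between the two factorizations and cancel in the ratio.

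First, I would write $K_p$ in block form as $K_p = \begin{pmatrix} K_{p-1} & k \\ k^T & c \end{pmatrix}$ and let $K_{p-1} = Q' R'$ be the unique QR factorization of $K_{p-1}$ with positive diagonal entries on $R'$. I would then construct an extended factorization $K_p = QR$ whose upper-left $(p-1) \times (p-1)$ block of $R$ is $R'$: this is done by augmenting $Q'$ with one new orthonormal column, spanning the residual of the last column of $K_p$ after projection onto the column space of $K_{p-1}$. The new entries of the last column of $R$ are the projection coefficients, and $r_{pp}$ is the norm of that residual. By the uniqueness lemma just proved (with the positive-diagonal convention), this extended factorization is the unique QR of $K_p$, so the first $p-1$ diagonal entries of $R$ really do match those of $R'$.

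Applying Lemma \ref{lemma:uppertriangulardeterminant} to both factorizations then yields $|K_p| = \pm \prod_{i=1}^{p} r_{ii}$ and $|K_{p-1}| = \pm \prod_{i=1}^{p-1} r_{ii}$, with the $\pm 1$ factors from $|Q|, |Q'|$ cancelling in the ratio. The telescoping product collapses to $|K_p| / |K_{p-1}| = r_{pp}$. The hardest step will be the construction of the extended QR: one must show that extending $Q'$ by a single orthonormal column in the enlarged ambient space is compatible with the upper triangular structure of $R$, and that the resulting $r_{pp}$ agrees with the Schur complement $c - k^T K_{p-1}^{-1} k$ that the block-matrix determinant formula assigns to $|K_p|/|K_{p-1}|$. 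Once this equivalence is verified, the uniqueness lemma guarantees that any QR algorithm applied to $K_p$ (Gram-Schmidt, Householder, or Givens) produces the same $r_{pp}$, making the identity independent of the computational method and explaining why $r_{pp}$ can be read off numerically even when $|K_p|$ and $|K_{p-1}|$ individually underflow to zero.
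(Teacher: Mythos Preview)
Your plan hinges on the claim that the first $p-1$ diagonal entries of $R$ in the QR factorization of $K_p$ coincide with the diagonal entries of $R'$ in the QR factorization of $K_{p-1}$, so that they cancel in the ratio. This step cannot be carried out: $Q'$ is a $(p-1)\times(p-1)$ orthogonal matrix whose columns live in $\mathbb{R}^{p-1}$, while the QR factorization of $K_p$ requires orthonormal columns in $\mathbb{R}^p$. There is no way to ``augment $Q'$ with one new orthonormal column in the enlarged ambient space'' and obtain a QR of $K_p$ whose leading $(p-1)\times(p-1)$ block of $R$ equals $R'$. Concretely, in Gram--Schmidt terms $r_{11}$ for $K_p$ is the Euclidean norm of the first column of $K_p$ in $\mathbb{R}^p$, whereas $r'_{11}$ for $K_{p-1}$ is the norm of its truncation to $\mathbb{R}^{p-1}$; these differ whenever the dropped entry is nonzero.

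The gap is fatal because the identity itself is false for QR. Take the Toeplitz autocorrelation matrix $K_2=\bigl(\begin{smallmatrix}2&1\\1&2\end{smallmatrix}\bigr)$ with $K_1=(2)$. Then $|K_2|/|K_1|=3/2$, which indeed equals the Schur complement $2-1\cdot\tfrac12\cdot1$ you invoke. But the QR of $K_2$ has $r_{11}=\sqrt{5}$ and $r_{22}=3/\sqrt{5}\neq 3/2$, so $r_{pp}$ does \emph{not} agree with the Schur complement. The nesting property you want holds for the Cholesky factorization instead: if $K_p=L_pL_p^T$ with $L_p$ lower triangular, the leading principal submatrix of $L_p$ is the Cholesky factor of $K_{p-1}$, and $|K_p|/|K_{p-1}|=\ell_{pp}^{\,2}$. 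The paper's own proof makes exactly the same unjustified cancellation of the first $p-1$ diagonal entries and suffers from the same defect; its telescoping argument is valid for Cholesky (or any triangular factorization with the leading-submatrix nesting property) but not for QR as stated.
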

\begin{proof}

\begin{equation*}
\prod_{k = 1}^p r_{kk} = 10^{\log \prod_{k = 1}^p r_{kk}} = 10^{\sum_{k = 1}^p \log r_{kk}}
\end{equation*}

\begin{equation*}
\frac{|K_p|}{|K_{p-1}|} = \frac{10^{\sum_{k = 1}^p \log r_{kk}}}{10^{\sum_{k = 1}^{p-1} \log r_{kk}}} = 10^{\sum_{k = 1}^p \log r_{kk} - \sum_{k = 1}^{p-1} \log r_{kk}} = r_{pp}
\end{equation*}

\end{proof}

Note that there is a unique QR factorization of $K_p$ that has all positive diagonal elements of $R$.
We need $R$ to have positive elements on its diagonal in order to be able to take their logarithm.
Using the above technique for computing the ratio $|K_p|/|K_{p-1}|$, we can rewrite Equation \ref{eqn:entropyrate} by bringing the factor of $1/2$ into the logarithm and assuming that the source process consists only of uncorrelated Gaussian noise:

\begin{equation*}
H(\mathcal{X}) = \log_2 \sigma \sqrt{2 \pi e}
\end{equation*}

Where we replace the ratio of determinants with the standard deviation of the source process.
The Shannon entropy differs from the R\'{e}nyi entropy only by a factor of $\sqrt{e/2}$.

\subsection{Measuring Shared Randomness}

The idea that underlies context-based authentication techniques is that two devices can build a shared key from a shared randomness source, which is not observable by third parties.
Evaluating the key generation scheme matters not just how much randomness is encoded in the signal to begin with but how much of that randomness is common to both devices.

The challenge in doing a theoretical analysis of the amount of mutual information that we would expect to be common to two context-based authenticators is that it tends to be situationally dependent.
Different environments, different physical configurations of devices, and other parameters weigh heavily on the correlation between two source processes.

To compare the amount of mutual information between two environmental signals to the amount of entropy, we collected some voltage measurements at high frequency from the power outlets in our institution's offices using \textsc{VoltKey} prototypes and analyzed their characteristics.
We used standard techniques from \S \ref{sec:contextoverview} to sample and time-align voltage measurements without key extraction or reconciliation.
We split each signal into blocks of 150,000 samples (about two seconds of data) and computed each block's mutual information and entropy.

There is an order of magnitude difference in entropy and mutual information---the amount of entropy common to two authenticating devices---for the signal that we tested in this experiment.
Said another way, only about 10\% of the randomness measured by the \textsc{VoltKey}s is common to both devices.
The entropy that is not common to both devices cannot be used to generate a key.

The entropy rate of the source process is about 16 bits per sample---slightly higher than we would expect if we were gathering uncorrelated Gaussian noise.
In Theorem \ref{thm:entropyrate}, we concluded that we would expect roughly 10--11 bits of entropy per sample.

Based on our experience working with context-based authentication methods, we think that this order-of-magnitude-gap between entropy and mutual information is probably typical of many noise types.
However, it is not easy to be confident without conducting a more formal analysis.

What is surprising is that conventional algorithms for extracting keys from a natural environmental noise process only generate bits at about $1/10$th the rate of the mutual information in this experiment.
We should expect the bit extraction rates to be slightly lower than the mutual information rate to avoid errors---perhaps $1/2$ or $1/3$ bit per sample---but this discrepancy is inefficient.
Even with such low bit extraction rates, most context-based authentication schemes still have to perform key reconciliation to eliminate bit errors in the shared key.

As we will see, the raw environmental signal's entropy rate affects the performance of the randomness distiller.

\section{Randomness Distillation}

Once we have extracted a bit sequence from an environmental source noise process, we need to distill its entropy to pass standard tests for randomness.
Raw bit sequences extracted from the environment often do not pass standard randomness tests (Table \ref{tab:nist}).

{\name} is a randomness corrector that transforms independent identically distributed samples from a random source to make their distribution closer to uniform.
It works by concatenating samples of the source into sequences.
By the asymptotic equipartition property, \emph{sequences} of samples will be nearly uniformly distributed even if the individual samples are not.
Von Neumann's corrector is a classical technique that aims to accomplish the same goal.
We compared the preformance of {\name} to Von Neumann's algorithm.

\paragraph{Von Neumann Corrector}

The Von Neumann Corrector was an early technique used to normalize the histogram of randomly generated bit sequences~\cite{vonneumann}.
Its goal is to generate bit sequences in which 1 and 0 are equally likely to occur from an input sequence of unfair coin tosses.
The Von Neumann Corrector groups the input sequence into pairs of bits, and it discards pairs in which both bits are the same ($\{1,1\}$ and $\{0,0\}$).
For pairs of bits that are not the same, The Von Neumann Corrector copies only the first bit in the sequence to the output.
It is essentially an application of a special case of the Asymptotic Equipartition Property, for which the sequence length is 2.
The Von Neumann corrector did not improve the NIST test pass rate when we applied it to the raw \textsc{VoltKey} bit sequences (see Table \ref{tab:nist}) enough to use its output as a cryptographic key.
We need a new technique to improve the quality of keys generated by context-based authentication systems.

{\name} is a new entropy distillation technique that converts long bit sequences with low entropy per bit into shorter bit sequences with high entropy.
{\name} takes as input a bit sequence from a random generator and groups bits into blocks $k$.
Each block of $k$ bits represents an integer in the range $[0,2^k-1]$.
{\name} then generates a histogram of all $k$-bit integers obtained from the raw input sequence.
The resulting histogram is divided into two categories: the typical set, which is almost uniformly distributed, and the non-typical set, which occurs either much more frequently or much less frequently than average.
Half of the input sequences are assumed to be part of the typical set, which is retained, and the other half of the sequences are discarded.

After discarding the non-typical set, we can represent the remaining elements of the typical set using $k-1$ bit sequences.
The new $k-1$ bit indices are the output of the corrector.
Fig. \ref{fig:histograms}(b) shows the histogram of the raw bits extracted from a \textsc{VoltKey} device grouped into 7-bit blocks and assigned indices in the range $[0,127]$.
Fig. \ref{fig:histograms}(a) shows the histogram of data after it has been corrected by {\name}.
For comparison, Fig. \ref{fig:histograms}(c) shows a histogram of 7-bit numbers generated by python's random number generator.

In Fig. \ref{fig:bit_extraction}(a), we plot the number of bits remaining after correction by {\name} as a function of the sequence length.
The Von Neumann corrector discards over 80\% of the bits in the input sequence.
By tuning the subsequence length, {\name}'s bit usage can be adapted into any size of data given by the user.

\subsection{Our Implementation of {\name}}
{\name} examines a binary stream, partitions that stream into bit sequences, and creates a histogram of the partitioned bit sequences.
{\name} separates the extracted sequences into two categories: typical sequences and non-typical sequences.
It then discards non-typical sequences and maps the typical sequences to new values based on those new histograms.
The key components of our algorithm are:
\begin{enumerate}
    \item Sequences of size $k$ get remapped into sequences of size $k-1$.
    \item We drop bits after each individual sequence from the bit stream. We discuss the reasons for doing this at the end of this section.
    \item Bit sequences that occur most frequently on the input stream are dropped. The remaining sequences (the typical set) are remapped to new bit sequences.
\end{enumerate}

\noindent
The details of our algorithm are below.
\begin{enumerate}[leftmargin=0cm,itemindent=.5cm,labelwidth=\itemindent,labelsep=0cm,align=left]
\item \textbf{Partition Bit Stream into Sequences} {\name} is given a bit stream, while a system is running, generated from environmental noise as input.
We first want to partition the bit stream into subsequences of length $k$.
If we let $k=8$, we then initialize four arrays two of them are of size $2^k$ while the other two arrays are size $2^{k-1}$.
These arrays each have different jobs.
One of the arrays (called \texttt{A}) keeps track of the number of occurrences of each subsequence in the input bit stream.
\texttt{A} holds a histogram of integers extracted from the environmental noise source, similar to that shown in Fig. \ref{fig:histograms}.
As we read bits from the input stream, we use array \texttt{B} to keep track of the order in which particular subsequences first occur.
After converting a subsequence from the input of length $k$ to an integer and updating arrays \texttt{A} and \texttt{B}, we skip $m$ bits in the input sequence.

\item \textbf{Find Highest Occurring Half of Sequences} To extract the typical set, we throw away the most commonly occurring subsequences of bits.
To do that we compare the number of occurrences of each subsequence to all others in \texttt{A}.
After knowing which subseqeunces are most common, we drop the most common.
If a index is found to be in the highest half we change that index in \texttt{A} to be $-1$.
We mark the same indices in \texttt{B}.

\item \textbf{Remap Binary Sequences and Return the Changed Stream} The last array, \texttt{C}, will be used to hold the remappings.
Now that we know which values to keep, we iterate through \texttt{B} and if the value is not marked as $-1$ remap it to the next available integer in the \texttt{C}.
If a number appears in the stream a second time that hasn't been remapped too, we throw it out.
After we preform these steps the result will be a bit extracted stream.
We also skip over the same amount of bits we did in the first step.
For example, if 192 is the first number that appears in the bit stream, and when indexed in array \texttt{A} the value is not $-1$.
Whenever we see the value 192 in the bit stream, we replace that value with its remapped equivalent that is defined in \texttt{C}.
So, every occurrence of 192 in the input data stream will be replaced with zero.
Also, we are replacing an original bit stream of size $k$ with a new bit stream of size $k-1$.
Therefore dropping the last bit.
\end{enumerate}

\paragraph{Real Environmental Signals}
\label{sec:realsignals}

Our analysis in \S \ref{sec:fancyentropyrate} assumes that the environmental signal is a stationary process, which is a valid assumption during periods of everyday activity.
But during periods of little or no activity, many environmental noise signals are not stationary.
During inactivity, the entropy rate is effectively zero, causing samples of the source noise process not to be independent or identically distributed.
The result is that we get very long runs of zeros or ones in our extracted bit sequences during periods of inactivity.
We can repair the bit sequence by dropping bits from the input.

\emph{Bit dropping} has the effect of shortening long runs of ones or zeros during periods of inactivity.
It can not make a non-stationary process stationary, but it can reduce the length of non-stationary sequences, making them less influential in the overall signal's properties.

An alternative we considered is to wait until the source noise process has a high enough entropy rate before beginning key generation.
But waiting for the environmental entropy rate to increase above a threshold is impractical for real systems because two authenticating devices may not agree on the exact moment when the entropy rate becomes high enough.
Coordinating between multiple devices would require additional communication, which wastes time.
Another advantage of bit dropping is that it allows keys to be generated immediately when requested by the user rather than waiting for an acceptable entropy rate in the environmental noise.

\subsection{Choosing Parameters $k$ and $m$}

The quality of keys generated by \textsc{Moonshine} depends heavily on our parameters in Algorithm \ref{algo:max}.
The relationship between parameter values and key quality is generally monotonic---higher parameter values usually produce better keys.
Larger values of $k$ and $m$ cause \textsc{Moonshine} to analyze the input datastream in longer blocks, reducing the similarity of nearby patterns.
The type of environmental source (audio, voltage, etc) also can change the relationship between parameter value and key quality.

In our implementation of \textsc{Moonshine}, an authenticator analyzes the noise source in real time to find the parameter values that maximize key quality and shares those parameters with the other device that is trying to pair.
Both devices then use the shared parameter values to apply Algorithm \ref{algo:max}, generating a shared key.

{\name} uses a warmup period during which the device continuously samples the environmental noise source, building a bit sequence histogram.
Once one device has collected enough data to characterize the properties of the environmental noise---encoded by the histogram---key generation commences.
We use the histogram generated during warmup to define mappings from input bit sequences to output bit sequences during key generation.

\newcommand\mycommfont[1]{\footnotesize\ttfamily\textcolor{gray}{#1}}
\SetCommentSty{mycommfont}
\begin{algorithm}
\DontPrintSemicolon
\KwIn{A bit sequence $B=\{b_1, b_2, \ldots, b_N\}$}
\KwIn{Number of bits in a subsequence $k$}
\KwIn{Number of bits to skip between subsequences $m$}
\KwOut{A mapping from $B \rightarrow \mathbb{Z}_{N/2}$}
\For{$j \gets 1$ \textbf{to} $k$} {
  $sequences[j].frequency \gets 0$\tcp*{Initialization}
  $sequences[j].value \gets 0$\;
  $sequences[j].order \gets NULL$\;
}
\tcc*{Count the frequency of occurrences of subsequences.}
\For{$i \gets 1$ \textbf{to} $N$ \textbf{by} $m+k$} {
  $subseq \gets \{b_i, b_{i+1}, \ldots, b_{i+k-1}\}$\tcp*{subseq is k-bit int}
  $sequences[subseq].frequency++$\;
  $sequences[subseq].value \gets subseq$\;
  \If{$sequences[subseq].order \neq NULL$} {
    $sequences[subseq].order \gets i/k$;
  }
}
$sequences \gets sort(sequences)$\tcp*{sort by frequency ascending}
$sequences \gets sequences[1..N/2]$\tcp*{Toss most-freq sequences}
\Return{$sequences$}\;
\caption{Algorithm for {\name}.}
\label{algo:max}
\end{algorithm}

\section{Evaluation of {\name}}
\label{sec:eval}

In this section, we evaluate the performance of {\name} with input data gathered from various forms of context-based authentication in real-world environments. Code is available from our GitHub repository\footnote{\url{https://github.com/jweezy24/Moonshine}}.
Our ZIA datasets are available from~\cite{MoonshineZenodo}.

We show that {\name} can generate high-quality keys from environmental noise sources with relatively low entropy rates, making them robust against attacks.
Before presenting evaluation results, the following details the hardware and parameter settings as well as main evaluation metrics.

\paragraph{Datasets}
The authors of \textsc{VoltKey}~\cite{voltkey} lent us prototypes of their hardware, which allowed us to benchmark our prototype implementation of {\name} on a realistic hardware platform.
In addition to data gathered by \textsc{VoltKey} hardware, we used the ZIA datasets published by Fomichev et al. ~\cite{ZIADatasets}.
The dataset consists of data streams collected from seven types of sensors---acceleration, luminosity, temperature, humidity, barometric pressure, magnetometer, gyroscope within different contexts.
It includes long sequences of synchronized data from each sensor in an office setting, a mobile device, and a car.
The authors preprocessed the data to generate bit sequences for each setting (Office, Mobile, Car).
All preprocessed bit sequences from the dataset have low entropy density, caused mainly by long periods of inactivity in the underlying source noise processes.
None of the bit sequences in the Office, Mobile, or Car datasets pass the NIST tests before being processed by {\name}.

We also generated a dataset of radio frequency measurements similar to those published in \cite{Jana-RSSI,proximate}.
The hardware we used to collect the RF dataset consisted of a short stub of wire connected directly to the analog-to-digital converter input of a microcontroller.
Our RF measurements produced relative high-quality bit sequences which passed most of the NIST tests before being processed by {\name}.

The audio dataset~\cite{Audio_dataset} as well as the mobile, office, and car datasets~\cite{ZIA_datasets} came from publicly available postings.
To evaluate the audio, we created the bit stream using a commonly used algorithm\cite{Audio_Alg}. 
Larger datasets are evaluated on a server-class machine because of memory restrictions on our Cortex M4. 
The server is also needed for NIST test evaluation, as the tests are not compatible with the Cortex M4 board.
We also evaluate our algorithm on \textsc{VoltKey} hardware in real-time to characterize its performance on an IoT-class platform.

\paragraph{Evaluation metrics}
We evaluate the performance of {\name} with various metrics.

\begin{itemize}
    \item NIST test pass rate: the fraction of NIST tests (discussed in \S \ref{sec:nist}) that pass. To generate high-quality keys, we want the NIST test pass rate to be as high as possible.
    \item Data retention rate: the fraction of input data retained in the output. We want to retain as much information as possible after processing by {\name}.
    \item Diversity of datasets: We used a wide range of data from varying sources. 
    
\end{itemize}

\subsection{Subsequence Length}

In this section, we want to understand how the key quality depends on the length of subsequences processed by {\name}.
According to the AEP (\S \ref{sec:aep}), longer subsequences should generate better keys.
In Fig. \ref{fig:bit_extraction}(b) we plot the NIST test pass rate as a function of subsequence length for {\name}-corrected data.
We can achieve a near-perfect pass rate for subsequences of at least 6 bits.
The same figure also plots the NIST test pass rate of the Von Neumann corrector for the same input data.

The tradeoff in {\name} is that the corrector discards a portion of the input data stream in the process of distilling its randomness.
Fig. \ref{fig:bit_extraction}(a) shows the amount of data remaining after randomness distillation.

Analyzing bits in longer subsequences increases the keys' quality because the typical set becomes more uniformly distributed on longer bit sequences.
As the sequences that {\name} considers become longer, there is a starker separation between the typical set and the non-typical set\footnote{This is the trend that is visualized in Fig. \ref{fig:sortedhist}.}.
For long sequences (8 or more bits), the non-typical set represents a smaller fraction of the overall data, and more of the sequences that we actually observe are elements of the typical set.
Since {\name} retains elements of the typical set and discards elements of the non-typical set, it makes sense that we would retain more overall data if we consider longer bit sequences.

\subsection{Run Time}

In Fig. \ref{fig:bit_extraction}(c), we plot the run time of {\name} on the \textsc{VoltKey} hardware, which uses an ARM Cortex M4 microcontroller running at 48 MHz.
{\name} does not add a substantial amount of time to the key generation process, even when processing long bit sequences.
By comparison, the sampling, bit extraction, and key reconciliation steps take about 20 seconds on a pair of \textsc{VoltKey}s.
This test ran on an input data stream of 200,000 bits gathered directly from the \textsc{VoltKey}.
The distilled bit sequence is in the tens of thousands of bits in length, much longer than generally used for an authentication key.
{\name} can distill randomness from the input data sequence in a reasonable amount of time.
The run time should not be substantially different when using different datasets as inputs.

\begin{figure}
    \centering
    \includegraphics[width=\hsize]{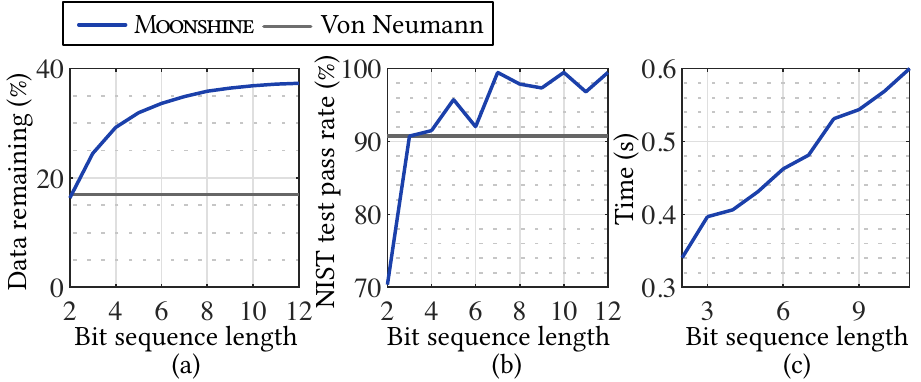}
    \vskip -10pt
    \caption{Evaluation of {\name} on \textsc{VoltKey}: (a) Percent of bits kept after {\name}. (b) Percentage of NIST tests that pass. (c) Runtime of {\name} on ARM Cortex-M4 microcontroller.}
    \label{fig:histograms}
    \vskip -10pt
\end{figure}

\subsection{Key Quality}

In this section, we evaluate the quality of keys generated by {\name} using the NIST test for randomness~\cite{NIST}.
{\name} makes two parameters available to the user: bit sequence length and bit drop length ($k$ and $m$ in Algorithm \ref{algo:max}).
Later evaluation will focus on the effect those parameters have on NIST test pass rates.
Fig. \ref{fig:discardlength} shows the NIST test pass rate as a function of $k$ and $m$ for the office, mobile, and car datasets, which generate raw data with similar entropy densities before being processed by {\name}. Overall, {\name} significantly improves the quality of keys generated from a wide variety of environmental noise sources.

\paragraph{Choosing Parameters $m$ and $k$}
The general trend is that larger parameter values tend to generate higher-quality keys, so we would advise users of {\name}  to choose larger parameter values to improve key quality.
This is because larger parameter values cause {\name} to process the input data stream in longer blocks, resulting in less local similarity.

An exception to the trend is in the Car and Mobile 1 datasets, in which the most significant parameter values cause almost all NIST tests to fail.
In those two datasets, the raw input datastream has such low randomness density already that {\name}  removes a substantial portion of the input bits when it is run with large parameter values.
This creates problems because the NIST test suite needs a minimum number of bits to determine whether a bit sequence passes each test.
The output data stream fails the NIST tests when {\name} removes too many bits.
We expect that keys generated from the Car and Mobile 1 datasets that fail with large parameter values would pass nearly all NIST tests if the datasets were larger.
In practice, a system that used  {\name} to distill keys from low-entropy sources would gather more data from the environmental noise source to generate a high-quality key.

We found that the office, mobile, and car datasets have low randomness density and did not pass any NIST test before being processed by {\name} because of their heavy reliance on human activity to generate randomness.
During periods of inactivity---such as a refuelling stop during data collection in the car---result in long runs of ones or zeros in the generated bit sequences, causing the NIST tests to fail.
\textsc{Moonshine} performs well even on low-quality datasets like Car and Mobile.

Fig. \ref{fig:bit_extraction}(b) shows the NIST test pass rate of {\name} running on \textsc{VoltKey} as a function of bit sequence length.
The entropy rate of \textsc{VoltKey} (discussed in \S \ref{sec:entropyrate}) is much higher than the corresponding entropy rate of the car, office, and mobile datasets.
The higher entropy rate of \textsc{VoltKey} gives us more entropy per key bit and better NIST test pass rates for the raw bit stream.
Raw bit sequences generated by \textsc{VoltKey} pass just over half the tests in the NIST suite.
{\name} increases the randomness across all data sets we evaluated.

\begin{figure}
    \centering
    \includegraphics[width=\hsize]{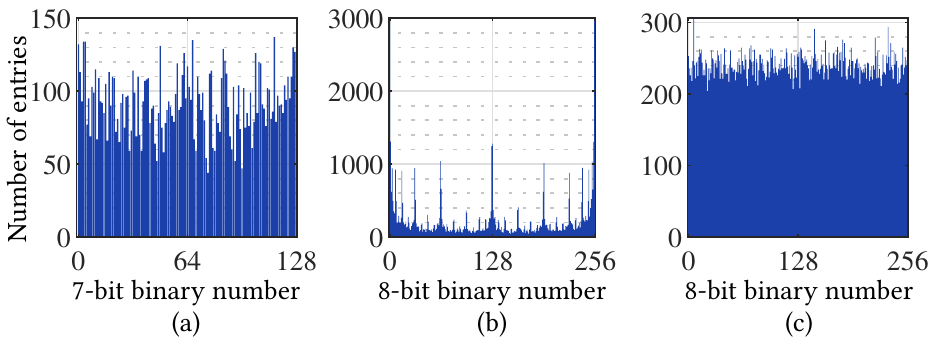}
    \vskip -10pt
    \caption{Histograms of random numbers generated by python's RNG compared to those generated by {\name} and \textsc{VoltKey}. (a) 7-bit sequences after removing elements of nontypical set. (b) Raw 7-bit sequences taken from VoltKey. (c) 7-bit sequences from Python's RNG.}
    \vskip -10pt
    \label{fig:bit_extraction}
\end{figure}

\begin{figure*}[!ht]
    \centering
    \vskip -10pt
    \includegraphics[width=\textwidth]{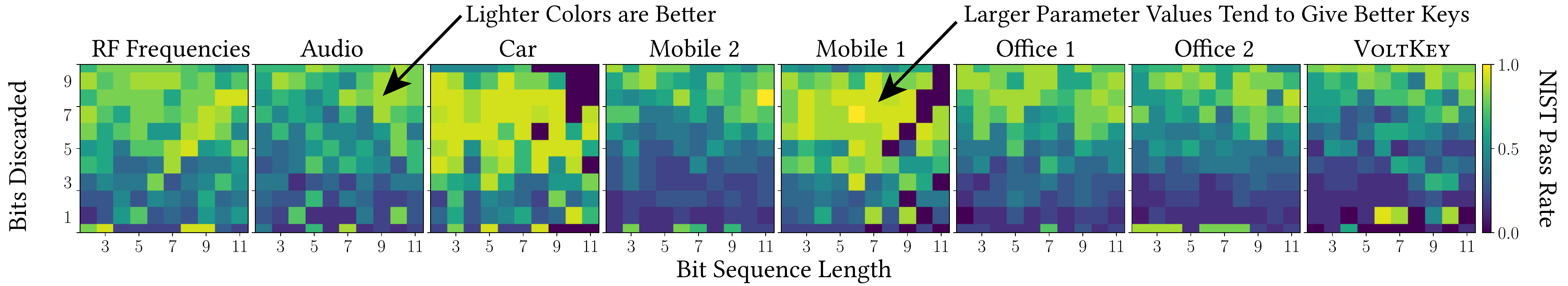}
    \vskip -10pt
    \caption{Fraction of NIST tests passed after applying {\name} as a function of discard length (on the y-axis) and bit sequence length (on the x-axis).}
    \label{fig:discardlength}
\end{figure*}

\begin{figure*}[!ht]
    \centering
    \vskip -10pt
    \includegraphics[width=\textwidth]{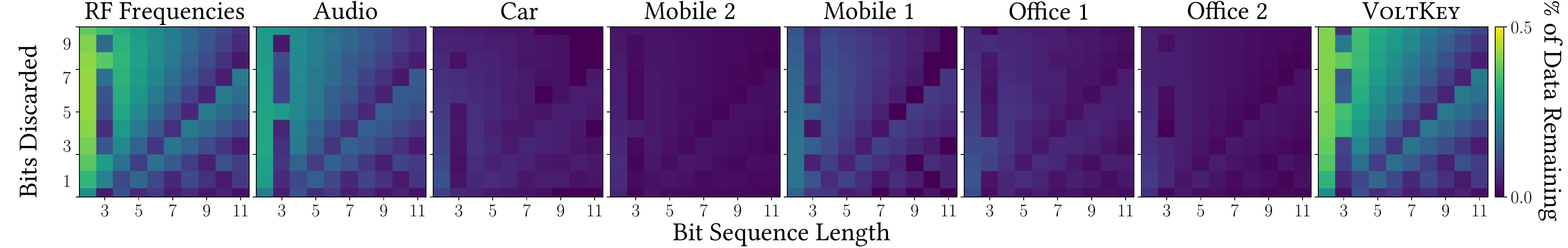}
    \vskip -10pt
    \caption{Fraction of data retained after applying {\name} as a function of discard length (on the y-axis) and bit sequence length (on the x-axis).}
    \label{fig:dataretention}
\end{figure*}

\subsection{Discard Length}

In this section, we test how the NIST test pass rate depends on the number of bits we discard (parameterized by $m$ in Algorithm \ref{algo:max}) when processing the input stream.
We discussed the technique of discarding bits in the input stream in \S \ref{sec:realsignals}.
Fig. \ref{fig:discardlength} shows a heat map of the NIST test pass rate as a function of discard length (y-axis) and input sequence length (x-axis).
Blue colors represent low NIST pass rates, and greens represent higher pass rates (lighter blues and greens are better).

In Fig. \ref{fig:discardlength}, there is a general correlation between data retention and NIST Test pass rate---the more data {\name} removes, the more NIST tests it passes, resulting in higher-quality keys.
Furthermore, the datasets that are higher quality to begin with (such as, \textsc{VoltKey}, Audio, and RF) retain more data after applying {\name}.
The lower-quality datasets have more repetition that {\name}  must remove to produce a high-quality key.

\subsection{Data Retention}

In this section, we evaluate the fraction of input bits that remain in a key after being processed by {\name}.
Fig. \ref{fig:dataretention} is a color graph which represents the percentage of data retained after {\name} distills entropy from the input bit stream.
Data retained is a function of the bit sequence length $k$ and the discard length $m$.

\textsc{VoltKey}, which generates raw bit sequences with relatively high entropy rates, shows an increasing trend of data retained as bit sequence length increases.
According to the AEP, longer sequences of samples will be more uniformly distributed and have a wider separation between the typical set and the nontypical set.
Since {\name} throws away elements of the nontypical set, there will be less information to throw away as sequence lengths get longer.

The office, mobile, and car data exhibit the same trend.
However, the datasets retain less data.
This is caused by the long periods of inactivity in those datasets.
Long runs of ones and zeros on the input bit sequence represent a large percentage of the total bits extracted.
{\name} eliminates those long runs of ones and zeros even when $k$ is small.

{\name} works by selectively dropping bits from the raw input bit stream, so we expect that the output bit sequences will be shorter than the input.
The bits that are dropped by {\name} are those that carry the least amount of information.

\section{Discussion}

In this section, we give some practical advice for designers of context-based authentication mechanisms which is based on the findings in this paper.
We gave two methods of estimating the entropy rate encoded in an environmental noise signal: we calculate the approximate R\'{e}nyi entropy rate of a Gaussian noise process in Theorem \ref{thm:entropyrate}, and we give the more general form of a Shannon entropy rate of a noise process that includes a periodic signal plus additive white Gaussian noise in Equation \ref{eqn:entropyrate}.

\paragraph{Advice \#1: Filter out as much periodic noise as possible from the environmental noise signal before digitizing}

The goal of any context-based key generation scheme is to maximize the mutual information rate between two authenticators.
Since mutual information of two random variables is upper-bounded by each random variable's entropy, we must maximize entropy of the individual source noise processes is to maximize mutual information.

\paragraph{Advice \#2: Apply a Randomness Corrector to the Authentication Key Before Reconciliation}

The raw bit sequences generated from environmental noise tend to have some predictable structure that causes them to fail the NIST test suite (Table \ref{tab:nist}).
This seems to be caused by a low information density, or randomness per bit, in the extracted bit stream.
Using {\name}, we can distill the randomness from a long bit sequence into a shorter bit sequence, yielding a more secure key.
The penalty that we pay when using a randomness corrector is that we must collect $2-3 \times$ as many bits from the environmental noise process.

\paragraph{Advice \#3: Use a High-Entropy Environmental Noise Source if it is Available.}
{\name} is able to distill entropy from a high-entropy source of randomness like \textsc{VoltKey} to generate cryptographic keys that pass the NIST test suite.
We expect that other context-based authentication systems that generate bit sequences that can pass around half of the NIST tests would benefit significantly from {\name}.
{\name} is also able to improve key quality of bit sequences generated by extremely poor entropy sources like the mobile, office, and car datasets we analyzed in \S \ref{sec:eval}.

\section{Conclusion}

In this work, we presented {\name}, a technique to distill randomness from low-entropy sources.
We observed that keys generated from environmental noise sources are predictable largely because the distribution that of the random variable that we use to generate keys is often nonuniformly distributed.
Our methods take advantage of the asymptotic equipartition theorem, which says that long sequences of iid samples from of any random variable will be almost uniformly distributed even if the the distribution of the individual samples is not uniform.
{\name} uses the AEP to identify and remove sequences of samples that are not uniformly distributed.

In our evaluation, we use the NIST test for randomness to evaluate the quality of keys generated by {\name} operating on input data from several publicly available datasets.
Our evaluation showed that {\name} produces keys with substantially higher NIST test pass rates than the raw bit sequences extracted from multiple different environmental sources.

\begin{acks}
This work was supported by the National Science Foundation under award CNS-1845469, CNS-2003129, CNS-1838733, CNS-1719336, CNS-1647152, and CNS-1629833.
\end{acks}

\balance
\bibliographystyle{ACM-Reference-Format}
\bibliography{refs}  


\begin{thebibliography}{37}


\ifx \showCODEN    \undefined \def \showCODEN     #1{\unskip}     \fi
\ifx \showDOI      \undefined \def \showDOI       #1{#1}\fi
\ifx \showISBNx    \undefined \def \showISBNx     #1{\unskip}     \fi
\ifx \showISBNxiii \undefined \def \showISBNxiii  #1{\unskip}     \fi
\ifx \showISSN     \undefined \def \showISSN      #1{\unskip}     \fi
\ifx \showLCCN     \undefined \def \showLCCN      #1{\unskip}     \fi
\ifx \shownote     \undefined \def \shownote      #1{#1}          \fi
\ifx \showarticletitle \undefined \def \showarticletitle #1{#1}   \fi
\ifx \showURL      \undefined \def \showURL       {\relax}        \fi
\providecommand\bibfield[2]{#2}
\providecommand\bibinfo[2]{#2}
\providecommand\natexlab[1]{#1}
\providecommand\showeprint[2][]{arXiv:#2}

\bibitem[\protect\citeauthoryear{Belkhouja, Du, Mohamed, Al-Ali, and
  Guizani}{Belkhouja et~al\mbox{.}}{2019}]%
        {BELKHOUJA2019109}
\bibfield{author}{\bibinfo{person}{Taha Belkhouja}, \bibinfo{person}{Xiaojiang
  Du}, \bibinfo{person}{Amr Mohamed}, \bibinfo{person}{Abdulla~K. Al-Ali},
  {and} \bibinfo{person}{Mohsen Guizani}.} \bibinfo{year}{2019}\natexlab{}.
\newblock \showarticletitle{Biometric-based authentication scheme for
  Implantable Medical Devices during emergency situations}.
\newblock \bibinfo{journal}{\emph{Future Generation Computer Systems}}
  \bibinfo{volume}{98} (\bibinfo{year}{2019}), \bibinfo{pages}{109 -- 119}.
\newblock
\showISSN{0167-739X}
\urldef\tempurl%
\url{https://doi.org/10.1016/j.future.2019.02.002}
\showDOI{\tempurl}


\bibitem[\protect\citeauthoryear{Bennett, Brassard, Crepeau, and
  Maurer}{Bennett et~al\mbox{.}}{1995}]%
        {privacyamp}
\bibfield{author}{\bibinfo{person}{C.~H. Bennett}, \bibinfo{person}{G.
  Brassard}, \bibinfo{person}{C. Crepeau}, {and} \bibinfo{person}{U.~M.
  Maurer}.} \bibinfo{year}{1995}\natexlab{}.
\newblock \showarticletitle{Generalized Privacy Amplification}.
\newblock \bibinfo{journal}{\emph{IEEE Trans. Inf. Theor.}}
  \bibinfo{volume}{41}, \bibinfo{number}{6} (\bibinfo{date}{Nov.}
  \bibinfo{year}{1995}), \bibinfo{pages}{1915--1923}.
\newblock
\showISSN{0018-9448}
\urldef\tempurl%
\url{https://doi.org/10.1109/18.476316}
\showDOI{\tempurl}


\bibitem[\protect\citeauthoryear{Cachin and Maurer}{Cachin and Maurer}{1997}]%
        {cachin97}
\bibfield{author}{\bibinfo{person}{Christian Cachin} {and}
  \bibinfo{person}{Ueli~M. Maurer}.} \bibinfo{year}{1997}\natexlab{}.
\newblock \showarticletitle{Linking Information Reconciliation and Privacy
  Amplification}.
\newblock \bibinfo{journal}{\emph{J. Cryptol.}} \bibinfo{volume}{10},
  \bibinfo{number}{2} (\bibinfo{date}{March} \bibinfo{year}{1997}),
  \bibinfo{pages}{97--110}.
\newblock
\showISSN{0933-2790}
\urldef\tempurl%
\url{https://doi.org/10.1007/s001459900023}
\showDOI{\tempurl}


\bibitem[\protect\citeauthoryear{Cover and Thomas}{Cover and Thomas}{[n. d.]}]%
        {coverthomas}
\bibfield{author}{\bibinfo{person}{Thomas~M. Cover} {and}
  \bibinfo{person}{Joy~A. Thomas}.} \bibinfo{year}{[n. d.]}\natexlab{}.
\newblock \bibinfo{booktitle}{\emph{Elements of Information Theory}}.
\newblock \bibinfo{publisher}{Wiley}, \bibinfo{address}{Hoboken, New Jersey}.
\newblock


\bibitem[\protect\citeauthoryear{Dodis, Reyzin, and Smith}{Dodis
  et~al\mbox{.}}{2004}]%
        {fuzzy_extractor}
\bibfield{author}{\bibinfo{person}{Yevgeniy Dodis}, \bibinfo{person}{Leonid
  Reyzin}, {and} \bibinfo{person}{Adam Smith}.}
  \bibinfo{year}{2004}\natexlab{}.
\newblock \showarticletitle{Fuzzy extractors and cryptography, or how to use
  your fingerprints}. In \bibinfo{booktitle}{\emph{Proc. Eurocrypt}},
  Vol.~\bibinfo{volume}{4}.
\newblock


\bibitem[\protect\citeauthoryear{Dupont, Hesse, Pointcheval, Reyzin, and
  Yakoubov}{Dupont et~al\mbox{.}}{2017}]%
        {fpake}
\bibfield{author}{\bibinfo{person}{Pierre-Alain Dupont}, \bibinfo{person}{Julia
  Hesse}, \bibinfo{person}{David Pointcheval}, \bibinfo{person}{Leonid Reyzin},
  {and} \bibinfo{person}{Sophia Yakoubov}.} \bibinfo{year}{2017}\natexlab{}.
\newblock \bibinfo{title}{Fuzzy Password Authenticated Key Exchange}.
\newblock \bibinfo{howpublished}{Cryptology ePrint Archive, Report 2017/1111}.
\newblock
\newblock
\shownote{\url{https://eprint.iacr.org/2017/1111}.}


\bibitem[\protect\citeauthoryear{E.~Hershey, Hassan, and Yarlagadda}{E.~Hershey
  et~al\mbox{.}}{1995}]%
        {Hershey-unconventional}
\bibfield{author}{\bibinfo{person}{John E.~Hershey}, \bibinfo{person}{Amer
  Hassan}, {and} \bibinfo{person}{Rao Yarlagadda}.}
  \bibinfo{year}{1995}\natexlab{}.
\newblock \showarticletitle{Unconventional cryptographic keying variable
  management}.
\newblock \bibinfo{journal}{\emph{Communications, IEEE Transactions on}}
  \bibinfo{volume}{43} (\bibinfo{date}{02} \bibinfo{year}{1995}),
  \bibinfo{pages}{3 -- 6}.
\newblock
\urldef\tempurl%
\url{https://doi.org/10.1109/26.385951}
\showDOI{\tempurl}


\bibitem[\protect\citeauthoryear{Eckert, Tehranipoor, and Chandy}{Eckert
  et~al\mbox{.}}{2017}]%
        {drng}
\bibfield{author}{\bibinfo{person}{Charles Eckert}, \bibinfo{person}{Fatemeh
  Tehranipoor}, {and} \bibinfo{person}{John~A. Chandy}.}
  \bibinfo{year}{2017}\natexlab{}.
\newblock \showarticletitle{DRNG: DRAM-based random number generation using its
  startup value behavior}.
\newblock \bibinfo{journal}{\emph{2017 IEEE 60th International Midwest
  Symposium on Circuits and Systems (MWSCAS)}} (\bibinfo{year}{2017}),
  \bibinfo{pages}{1260--1263}.
\newblock


\bibitem[\protect\citeauthoryear{Fomichev, Maass, Almon, Molina, and
  Hollick}{Fomichev et~al\mbox{.}}{2019a}]%
        {ZIA_datasets}
\bibfield{author}{\bibinfo{person}{Mikhail Fomichev}, \bibinfo{person}{Max
  Maass}, \bibinfo{person}{Lars Almon}, \bibinfo{person}{Alejandro Molina},
  {and} \bibinfo{person}{Matthias Hollick}.} \bibinfo{year}{2019}\natexlab{a}.
\newblock \bibinfo{title}{{Index of supplementary files from "Perils of Zero-
  Interaction Security in the Internet of Things"}}.
\newblock
\newblock
\urldef\tempurl%
\url{https://doi.org/10.5281/zenodo.2537721}
\showDOI{\tempurl}


\bibitem[\protect\citeauthoryear{Fomichev, Maass, Almon, Molina, and
  Hollick}{Fomichev et~al\mbox{.}}{2019b}]%
        {ZIADatasets}
\bibfield{author}{\bibinfo{person}{Mikhail Fomichev}, \bibinfo{person}{Max
  Maass}, \bibinfo{person}{Lars Almon}, \bibinfo{person}{Alejandro Molina},
  {and} \bibinfo{person}{Matthias Hollick}.} \bibinfo{year}{2019}\natexlab{b}.
\newblock \bibinfo{title}{{Processed data from Office scenario from "Perils of
  Zero-Interaction Security in the Internet of Things"}}.
\newblock
\newblock
\urldef\tempurl%
\url{https://doi.org/10.5281/zenodo.2537707}
\showDOI{\tempurl}


\bibitem[\protect\citeauthoryear{Gubner}{Gubner}{2006}]%
        {gubner}
\bibfield{author}{\bibinfo{person}{John~A. Gubner}.}
  \bibinfo{year}{2006}\natexlab{}.
\newblock \bibinfo{booktitle}{\emph{Probability and Random Processes for
  Electrical and Computer Engineers}}.
\newblock \bibinfo{publisher}{Cambridge University Press},
  \bibinfo{address}{New York, NY, USA}.
\newblock
\showISBNx{0521864704}


\bibitem[\protect\citeauthoryear{Jana, Premnath, Clark, Kasera, Patwari, and
  Krishnamurthy}{Jana et~al\mbox{.}}{2009}]%
        {Jana-RSSI}
\bibfield{author}{\bibinfo{person}{Suman Jana}, \bibinfo{person}{Sriram~Nandha
  Premnath}, \bibinfo{person}{Mike Clark}, \bibinfo{person}{Sneha~K. Kasera},
  \bibinfo{person}{Neal Patwari}, {and} \bibinfo{person}{Srikanth~V.
  Krishnamurthy}.} \bibinfo{year}{2009}\natexlab{}.
\newblock \showarticletitle{On the Effectiveness of Secret Key Extraction from
  Wireless Signal Strength in Real Environments}. In
  \bibinfo{booktitle}{\emph{Proceedings of the 15th Annual International
  Conference on Mobile Computing and Networking}}
  \emph{(\bibinfo{series}{MobiCom '09})}. \bibinfo{publisher}{ACM},
  \bibinfo{address}{New York, NY, USA}, \bibinfo{pages}{321--332}.
\newblock
\showISBNx{978-1-60558-702-8}
\urldef\tempurl%
\url{https://doi.org/10.1145/1614320.1614356}
\showDOI{\tempurl}


\bibitem[\protect\citeauthoryear{Juels and Sudan}{Juels and Sudan}{2006}]%
        {Fuzzy-Vault}
\bibfield{author}{\bibinfo{person}{Ari Juels} {and} \bibinfo{person}{Madhu
  Sudan}.} \bibinfo{year}{2006}\natexlab{}.
\newblock \showarticletitle{A fuzzy vault scheme}.
\newblock \bibinfo{journal}{\emph{Designs, Codes and Cryptography}}
  \bibinfo{volume}{38}, \bibinfo{number}{2} (\bibinfo{year}{2006}),
  \bibinfo{pages}{237--257}.
\newblock


\bibitem[\protect\citeauthoryear{Juels and Wattenberg}{Juels and
  Wattenberg}{1999}]%
        {fuzzy_commitment}
\bibfield{author}{\bibinfo{person}{Ari Juels} {and} \bibinfo{person}{Martin
  Wattenberg}.} \bibinfo{year}{1999}\natexlab{}.
\newblock \showarticletitle{A Fuzzy Commitment Scheme}. In
  \bibinfo{booktitle}{\emph{Proceedings of the 6th ACM Conference on Computer
  and Communications Security}} \emph{(\bibinfo{series}{CCS ’99})}.
  \bibinfo{publisher}{Association for Computing Machinery},
  \bibinfo{address}{New York, NY, USA}, \bibinfo{pages}{28–36}.
\newblock
\showISBNx{1581131488}
\urldef\tempurl%
\url{https://doi.org/10.1145/319709.319714}
\showDOI{\tempurl}


\bibitem[\protect\citeauthoryear{Kalamandeen, Scannell, de~Lara, Sheth, and
  LaMarca}{Kalamandeen et~al\mbox{.}}{2010}]%
        {Ensemble}
\bibfield{author}{\bibinfo{person}{Andre Kalamandeen}, \bibinfo{person}{Adin
  Scannell}, \bibinfo{person}{Eyal de Lara}, \bibinfo{person}{Anmol Sheth},
  {and} \bibinfo{person}{Anthony LaMarca}.} \bibinfo{year}{2010}\natexlab{}.
\newblock \showarticletitle{Ensemble: Cooperative Proximity-based
  Authentication}. In \bibinfo{booktitle}{\emph{Proceedings of the 8th
  International Conference on Mobile Systems, Applications, and Services}}
  \emph{(\bibinfo{series}{MobiSys '10})}. \bibinfo{publisher}{ACM},
  \bibinfo{address}{New York, NY, USA}, \bibinfo{pages}{331--344}.
\newblock
\showISBNx{978-1-60558-985-5}
\urldef\tempurl%
\url{https://doi.org/10.1145/1814433.1814466}
\showDOI{\tempurl}


\bibitem[\protect\citeauthoryear{Lee, Klingensmith, Banerjee, and Kim}{Lee
  et~al\mbox{.}}{2019}]%
        {voltkey}
\bibfield{author}{\bibinfo{person}{Kyuin Lee}, \bibinfo{person}{Neil
  Klingensmith}, \bibinfo{person}{Suman Banerjee}, {and}
  \bibinfo{person}{Younghyun Kim}.} \bibinfo{year}{2019}\natexlab{}.
\newblock \showarticletitle{VoltKey: Continuous Secret Key Generation Based on
  Power Line Noise for Zero-Involvement Pairing and Authentication}.
\newblock \bibinfo{journal}{\emph{Proc. ACM Interact. Mob. Wearable Ubiquitous
  Technol.}} \bibinfo{volume}{3}, \bibinfo{number}{3}, Article
  \bibinfo{articleno}{93} (\bibinfo{date}{Sept.} \bibinfo{year}{2019}),
  \bibinfo{numpages}{26}~pages.
\newblock
\showISSN{2474-9567}
\urldef\tempurl%
\url{https://doi.org/10.1145/3351251}
\showDOI{\tempurl}


\bibitem[\protect\citeauthoryear{Lee, Klingensmith, He, Banerjee, and Kim}{Lee
  et~al\mbox{.}}{2020}]%
        {Lee-WiSec20}
\bibfield{author}{\bibinfo{person}{Kyuin Lee}, \bibinfo{person}{Neil
  Klingensmith}, \bibinfo{person}{Dong He}, \bibinfo{person}{Suman Banerjee},
  {and} \bibinfo{person}{Younghyun Kim}.} \bibinfo{year}{2020}\natexlab{}.
\newblock \showarticletitle{{ivPair}: Context-Based Fast Intra-Vehicle Device
  Pairing for Secure Wireless Connectivity}. In
  \bibinfo{booktitle}{\emph{Proceedings of the 13th ACM Conference on Security
  and Privacy in Wireless and Mobile Networks}} \emph{(\bibinfo{series}{WiSec
  '20})}. \bibinfo{publisher}{Association for Computing Machinery},
  \bibinfo{address}{New York, NY, USA}, \bibinfo{pages}{25–30}.
\newblock
\showISBNx{9781450380065}
\urldef\tempurl%
\url{https://doi.org/10.1145/3395351.3399436}
\showDOI{\tempurl}


\bibitem[\protect\citeauthoryear{Lin, Xu, Liu, Khamis, Hu, Hassan, and
  Seneviratne}{Lin et~al\mbox{.}}{2019}]%
        {Lin-H2B}
\bibfield{author}{\bibinfo{person}{Qi Lin}, \bibinfo{person}{Weitao Xu},
  \bibinfo{person}{Jun Liu}, \bibinfo{person}{Abdelwahed Khamis},
  \bibinfo{person}{Wen Hu}, \bibinfo{person}{Mahbub Hassan}, {and}
  \bibinfo{person}{Aruna Seneviratne}.} \bibinfo{year}{2019}\natexlab{}.
\newblock \showarticletitle{H2B: Heartbeat-based Secret Key Generation Using
  Piezo Vibration Sensors}. In \bibinfo{booktitle}{\emph{Proceedings of the
  18th International Conference on Information Processing in Sensor Networks}}
  \emph{(\bibinfo{series}{IPSN '19})}. \bibinfo{publisher}{ACM},
  \bibinfo{address}{New York, NY, USA}, \bibinfo{pages}{265--276}.
\newblock
\showISBNx{978-1-4503-6284-9}
\urldef\tempurl%
\url{https://doi.org/10.1145/3302506.3310406}
\showDOI{\tempurl}


\bibitem[\protect\citeauthoryear{Mathur, Miller, Varshavsky, Trappe, and
  Mandayam}{Mathur et~al\mbox{.}}{2011}]%
        {proximate}
\bibfield{author}{\bibinfo{person}{Suhas Mathur}, \bibinfo{person}{Robert
  Miller}, \bibinfo{person}{Alexander Varshavsky}, \bibinfo{person}{Wade
  Trappe}, {and} \bibinfo{person}{Narayan Mandayam}.}
  \bibinfo{year}{2011}\natexlab{}.
\newblock \showarticletitle{ProxiMate: Proximity-based Secure Pairing Using
  Ambient Wireless Signals}. In \bibinfo{booktitle}{\emph{Proceedings of the
  9th International Conference on Mobile Systems, Applications, and Services}}
  \emph{(\bibinfo{series}{MobiSys '11})}. \bibinfo{publisher}{ACM},
  \bibinfo{address}{New York, NY, USA}, \bibinfo{pages}{211--224}.
\newblock
\showISBNx{978-1-4503-0643-0}
\urldef\tempurl%
\url{https://doi.org/10.1145/1999995.2000016}
\showDOI{\tempurl}


\bibitem[\protect\citeauthoryear{{Maurer}}{{Maurer}}{1993}]%
        {publicdiscussion}
\bibfield{author}{\bibinfo{person}{U.~M. {Maurer}}.}
  \bibinfo{year}{1993}\natexlab{}.
\newblock \showarticletitle{Secret key agreement by public discussion from
  common information}.
\newblock \bibinfo{journal}{\emph{IEEE Transactions on Information Theory}}
  \bibinfo{volume}{39}, \bibinfo{number}{3} (\bibinfo{year}{1993}),
  \bibinfo{pages}{733--742}.
\newblock
\urldef\tempurl%
\url{https://doi.org/10.1109/18.256484}
\showDOI{\tempurl}


\bibitem[\protect\citeauthoryear{Mayrhofer and Gellersen}{Mayrhofer and
  Gellersen}{2007}]%
        {Mayrhofer-Pervcomp07}
\bibfield{author}{\bibinfo{person}{Rene Mayrhofer} {and} \bibinfo{person}{Hans
  Gellersen}.} \bibinfo{year}{2007}\natexlab{}.
\newblock \showarticletitle{Shake Well Before Use: Authentication Based on
  Accelerometer Data}. In \bibinfo{booktitle}{\emph{Pervasive Computing}},
  \bibfield{editor}{\bibinfo{person}{Anthony LaMarca}, \bibinfo{person}{Marc
  Langheinrich}, {and} \bibinfo{person}{Khai~N. Truong}} (Eds.).
  \bibinfo{publisher}{Springer Berlin Heidelberg}, \bibinfo{address}{Berlin,
  Heidelberg}, \bibinfo{pages}{144--161}.
\newblock
\showISBNx{978-3-540-72037-9}


\bibitem[\protect\citeauthoryear{Miettinen, Asokan, Nguyen, Sadeghi, and
  Sobhani}{Miettinen et~al\mbox{.}}{2014}]%
        {Miettinen-CCS14}
\bibfield{author}{\bibinfo{person}{Markus Miettinen}, \bibinfo{person}{N.
  Asokan}, \bibinfo{person}{Thien~Duc Nguyen}, \bibinfo{person}{Ahmad-Reza
  Sadeghi}, {and} \bibinfo{person}{Majid Sobhani}.}
  \bibinfo{year}{2014}\natexlab{}.
\newblock \showarticletitle{Context-Based Zero-Interaction Pairing and Key
  Evolution for Advanced Personal Devices}. In
  \bibinfo{booktitle}{\emph{Proceedings of the 2014 ACM SIGSAC Conference on
  Computer and Communications Security}} \emph{(\bibinfo{series}{CCS '14})}.
  \bibinfo{publisher}{ACM}, \bibinfo{address}{New York, NY, USA},
  \bibinfo{pages}{880--891}.
\newblock
\showISBNx{978-1-4503-2957-6}
\urldef\tempurl%
\url{https://doi.org/10.1145/2660267.2660334}
\showDOI{\tempurl}


\bibitem[\protect\citeauthoryear{Miettinen, Nguyen, Sadeghi, and
  Asokan}{Miettinen et~al\mbox{.}}{2018}]%
        {Miettinen-DAC18}
\bibfield{author}{\bibinfo{person}{Markus Miettinen},
  \bibinfo{person}{Thien~Duc Nguyen}, \bibinfo{person}{Ahmad-Reza Sadeghi},
  {and} \bibinfo{person}{N. Asokan}.} \bibinfo{year}{2018}\natexlab{}.
\newblock \showarticletitle{Revisiting Context-based Authentication in IoT}. In
  \bibinfo{booktitle}{\emph{Proceedings of the 55th Annual Design Automation
  Conference}} \emph{(\bibinfo{series}{DAC '18})}. \bibinfo{publisher}{ACM},
  \bibinfo{address}{New York, NY, USA}, Article \bibinfo{articleno}{32},
  \bibinfo{numpages}{6}~pages.
\newblock
\showISBNx{978-1-4503-5700-5}
\urldef\tempurl%
\url{https://doi.org/10.1145/3195970.3196106}
\showDOI{\tempurl}


\bibitem[\protect\citeauthoryear{Rostami, Juels, and Koushanfar}{Rostami
  et~al\mbox{.}}{2013}]%
        {h2h}
\bibfield{author}{\bibinfo{person}{Masoud Rostami}, \bibinfo{person}{Ari
  Juels}, {and} \bibinfo{person}{Farinaz Koushanfar}.}
  \bibinfo{year}{2013}\natexlab{}.
\newblock \showarticletitle{Heart-to-heart (H2H): authentication for implanted
  medical devices}. In \bibinfo{booktitle}{\emph{Proceedings of the 2013 ACM
  SIGSAC conference on Computer \&\#38; communications security}}
  \emph{(\bibinfo{series}{CCS '13})}. \bibinfo{publisher}{ACM},
  \bibinfo{address}{New York, NY, USA}, \bibinfo{pages}{1099--1112}.
\newblock
\showISBNx{978-1-4503-2477-9}
\urldef\tempurl%
\url{https://doi.org/10.1145/2508859.2516658}
\showDOI{\tempurl}


\bibitem[\protect\citeauthoryear{Rukhin, Soto, Nechvatal, Smid, Barker, Leigh,
  Levenson, Vangel, Banks, Heckert, Dray, and Vo}{Rukhin et~al\mbox{.}}{[n.
  d.]}]%
        {NIST}
\bibfield{author}{\bibinfo{person}{Andrew Rukhin}, \bibinfo{person}{Juan Soto},
  \bibinfo{person}{James Nechvatal}, \bibinfo{person}{Miles Smid},
  \bibinfo{person}{Elaine Barker}, \bibinfo{person}{Stefan Leigh},
  \bibinfo{person}{Mark Levenson}, \bibinfo{person}{Mark Vangel},
  \bibinfo{person}{David Banks}, \bibinfo{person}{Alan Heckert},
  \bibinfo{person}{James Dray}, {and} \bibinfo{person}{San Vo}.}
  \bibinfo{year}{[n. d.]}\natexlab{}.
\newblock \showarticletitle{A Statistical Test Suite for Random and
  Pseudorandom Number Generators for Cryptographic Applications}.
\newblock  (\bibinfo{year}{[n. d.]}).
\newblock


\bibitem[\protect\citeauthoryear{{Saxena}, {Ekberg}, {Kostiainen}, and
  {Asokan}}{{Saxena} et~al\mbox{.}}{2006}]%
        {Saxena-SP06}
\bibfield{author}{\bibinfo{person}{Nitesh {Saxena}}, \bibinfo{person}{Jan-Erik
  {Ekberg}}, \bibinfo{person}{Kari {Kostiainen}}, {and} \bibinfo{person}{N.
  {Asokan}}.} \bibinfo{year}{2006}\natexlab{}.
\newblock \showarticletitle{Secure device pairing based on a visual channel}.
  In \bibinfo{booktitle}{\emph{2006 IEEE Symposium on Security and Privacy (S
  P'06)}}. \bibinfo{pages}{6 pp.--313}.
\newblock
\showISSN{1081-6011}
\urldef\tempurl%
\url{https://doi.org/10.1109/SP.2006.35}
\showDOI{\tempurl}


\bibitem[\protect\citeauthoryear{{Sch{\"u}rmann} and {Sigg}}{{Sch{\"u}rmann}
  and {Sigg}}{2013}]%
        {Schurmann-TMC13}
\bibfield{author}{\bibinfo{person}{Dominik {Sch{\"u}rmann}} {and}
  \bibinfo{person}{Stephan {Sigg}}.} \bibinfo{year}{2013}\natexlab{}.
\newblock \showarticletitle{Secure Communication Based on Ambient Audio}.
\newblock \bibinfo{journal}{\emph{IEEE Transactions on Mobile Computing}}
  \bibinfo{volume}{12}, \bibinfo{number}{2} (\bibinfo{date}{Feb}
  \bibinfo{year}{2013}), \bibinfo{pages}{358--370}.
\newblock
\showISSN{1536-1233}
\urldef\tempurl%
\url{https://doi.org/10.1109/TMC.2011.271}
\showDOI{\tempurl}


\bibitem[\protect\citeauthoryear{{Schürmann} and {Sigg}}{{Schürmann} and
  {Sigg}}{2013}]%
        {Audio_Alg}
\bibfield{author}{\bibinfo{person}{D. {Schürmann}} {and} \bibinfo{person}{S.
  {Sigg}}.} \bibinfo{year}{2013}\natexlab{}.
\newblock \showarticletitle{Secure Communication Based on Ambient Audio}.
\newblock \bibinfo{journal}{\emph{IEEE Transactions on Mobile Computing}}
  \bibinfo{volume}{12}, \bibinfo{number}{2} (\bibinfo{year}{2013}),
  \bibinfo{pages}{358--370}.
\newblock
\urldef\tempurl%
\url{https://doi.org/10.1109/TMC.2011.271}
\showDOI{\tempurl}


\bibitem[\protect\citeauthoryear{Shrestha, Saxena, Truong, and Asokan}{Shrestha
  et~al\mbox{.}}{2014}]%
        {Shrestha-FCDS14}
\bibfield{author}{\bibinfo{person}{Babins Shrestha}, \bibinfo{person}{Nitesh
  Saxena}, \bibinfo{person}{Hien Thi~Thu Truong}, {and} \bibinfo{person}{N.
  Asokan}.} \bibinfo{year}{2014}\natexlab{}.
\newblock \showarticletitle{Drone to the Rescue: Relay-Resilient Authentication
  using Ambient Multi-sensing}. In \bibinfo{booktitle}{\emph{Financial
  Cryptography and Data Security}}, \bibfield{editor}{\bibinfo{person}{Nicolas
  Christin} {and} \bibinfo{person}{Reihaneh Safavi-Naini}} (Eds.).
  \bibinfo{publisher}{Springer Berlin Heidelberg}, \bibinfo{address}{Berlin,
  Heidelberg}, \bibinfo{pages}{349--364}.
\newblock
\showISBNx{978-3-662-45472-5}


\bibitem[\protect\citeauthoryear{Thiemann, Ito, and Vincent}{Thiemann
  et~al\mbox{.}}{2013}]%
        {Audio_dataset}
\bibfield{author}{\bibinfo{person}{Joachim Thiemann}, \bibinfo{person}{Nobutaka
  Ito}, {and} \bibinfo{person}{Emmanuel Vincent}.}
  \bibinfo{year}{2013}\natexlab{}.
\newblock \bibinfo{title}{{DEMAND: a collection of multi-channel recordings of
  acoustic noise in diverse environments}}.
\newblock
\newblock
\urldef\tempurl%
\url{https://doi.org/10.5281/zenodo.1227121}
\showDOI{\tempurl}
\newblock
\shownote{{Supported by Inria under the Associate Team Program VERSAMUS}.}


\bibitem[\protect\citeauthoryear{van~der Leest, Schrijen, Handschuh, and
  Tuyls}{van~der Leest et~al\mbox{.}}{2010}]%
        {floppuf}
\bibfield{author}{\bibinfo{person}{Vincent van~der Leest},
  \bibinfo{person}{Geert-Jan Schrijen}, \bibinfo{person}{Helena Handschuh},
  {and} \bibinfo{person}{Pim Tuyls}.} \bibinfo{year}{2010}\natexlab{}.
\newblock \showarticletitle{Hardware Intrinsic Security from D Flip-Flops}. In
  \bibinfo{booktitle}{\emph{Proceedings of the Fifth ACM Workshop on Scalable
  Trusted Computing}} \emph{(\bibinfo{series}{STC '10})}.
  \bibinfo{publisher}{Association for Computing Machinery},
  \bibinfo{address}{New York, NY, USA}, \bibinfo{pages}{53–62}.
\newblock
\showISBNx{9781450300957}
\urldef\tempurl%
\url{https://doi.org/10.1145/1867635.1867644}
\showDOI{\tempurl}


\bibitem[\protect\citeauthoryear{Varshavsky, Scannell, LaMarca, and
  de~Lara}{Varshavsky et~al\mbox{.}}{2007}]%
        {Amigo}
\bibfield{author}{\bibinfo{person}{Alex Varshavsky}, \bibinfo{person}{Adin
  Scannell}, \bibinfo{person}{Anthony LaMarca}, {and} \bibinfo{person}{Eyal de
  Lara}.} \bibinfo{year}{2007}\natexlab{}.
\newblock \showarticletitle{Amigo: Proximity-Based Authentication of Mobile
  Devices}. In \bibinfo{booktitle}{\emph{UbiComp 2007: Ubiquitous Computing}},
  \bibfield{editor}{\bibinfo{person}{John Krumm}, \bibinfo{person}{Gregory~D.
  Abowd}, \bibinfo{person}{Aruna Seneviratne}, {and} \bibinfo{person}{Thomas
  Strang}} (Eds.). \bibinfo{publisher}{Springer Berlin Heidelberg},
  \bibinfo{address}{Berlin, Heidelberg}, \bibinfo{pages}{253--270}.
\newblock
\showISBNx{978-3-540-74853-3}


\bibitem[\protect\citeauthoryear{Von~Neumann}{Von~Neumann}{1951}]%
        {vonneumann}
\bibfield{author}{\bibinfo{person}{John Von~Neumann}.}
  \bibinfo{year}{1951}\natexlab{}.
\newblock \showarticletitle{13. various techniques used in connection with
  random digits}.
\newblock \bibinfo{journal}{\emph{Appl. Math Ser}} \bibinfo{volume}{12},
  \bibinfo{number}{36-38} (\bibinfo{year}{1951}), \bibinfo{pages}{5}.
\newblock


\bibitem[\protect\citeauthoryear{West, Lee, Kim, Thiruvathukal, and
  Klingensmith}{West et~al\mbox{.}}{2021}]%
        {MoonshineZenodo}
\bibfield{author}{\bibinfo{person}{Jack West}, \bibinfo{person}{Kyuin Lee},
  \bibinfo{person}{Younghyun Kim}, \bibinfo{person}{George~K. Thiruvathukal},
  {and} \bibinfo{person}{Neil Klingensmith}.} \bibinfo{year}{2021}\natexlab{}.
\newblock \bibinfo{title}{Moonshine Context-Based Authentication Harvested
  Data}.
\newblock
\newblock
\urldef\tempurl%
\url{https://doi.org/10.5281/zenodo.4579965}
\showDOI{\tempurl}


\bibitem[\protect\citeauthoryear{Xi, Qian, Han, Zhao, Zhong, Li, and Zhao}{Xi
  et~al\mbox{.}}{2016}]%
        {xi16}
\bibfield{author}{\bibinfo{person}{Wei Xi}, \bibinfo{person}{Chen Qian},
  \bibinfo{person}{Jinsong Han}, \bibinfo{person}{Kun Zhao},
  \bibinfo{person}{Sheng Zhong}, \bibinfo{person}{Xiang-Yang Li}, {and}
  \bibinfo{person}{Jizhong Zhao}.} \bibinfo{year}{2016}\natexlab{}.
\newblock \showarticletitle{Instant and Robust Authentication and Key Agreement
  among Mobile Devices}. In \bibinfo{booktitle}{\emph{Proceedings of the 2016
  ACM SIGSAC Conference on Computer and Communications Security}}
  \emph{(\bibinfo{series}{CCS ’16})}. \bibinfo{publisher}{Association for
  Computing Machinery}, \bibinfo{address}{New York, NY, USA},
  \bibinfo{pages}{616–627}.
\newblock
\showISBNx{9781450341394}
\urldef\tempurl%
\url{https://doi.org/10.1145/2976749.2978298}
\showDOI{\tempurl}


\bibitem[\protect\citeauthoryear{Yang, Wang, and Zhang}{Yang
  et~al\mbox{.}}{2016}]%
        {Yang-EMG}
\bibfield{author}{\bibinfo{person}{Lin Yang}, \bibinfo{person}{Wei Wang}, {and}
  \bibinfo{person}{Qian Zhang}.} \bibinfo{year}{2016}\natexlab{}.
\newblock \showarticletitle{Secret from Muscle: Enabling Secure Pairing with
  Electromyography}. In \bibinfo{booktitle}{\emph{Proceedings of the 14th ACM
  Conference on Embedded Network Sensor Systems CD-ROM}}
  \emph{(\bibinfo{series}{SenSys '16})}. \bibinfo{publisher}{ACM},
  \bibinfo{address}{New York, NY, USA}, \bibinfo{pages}{28--41}.
\newblock
\showISBNx{978-1-4503-4263-6}
\urldef\tempurl%
\url{https://doi.org/10.1145/2994551.2994556}
\showDOI{\tempurl}


\bibitem[\protect\citeauthoryear{Zhang, Wang, Vasilakos, and Fang}{Zhang
  et~al\mbox{.}}{2012}]%
        {Zhang-ECG}
\bibfield{author}{\bibinfo{person}{Zhaoyang Zhang}, \bibinfo{person}{Honggang
  Wang}, \bibinfo{person}{Athanasios~V. Vasilakos}, {and} \bibinfo{person}{Hua
  Fang}.} \bibinfo{year}{2012}\natexlab{}.
\newblock \showarticletitle{ECG-Cryptography and Authentication in Body Area
  Networks}.
\newblock \bibinfo{journal}{\emph{IEEE Transactions on Information Technology
  in Biomedicine}} \bibinfo{volume}{16}, \bibinfo{number}{6}
  (\bibinfo{date}{Nov} \bibinfo{year}{2012}), \bibinfo{pages}{1070--1078}.
\newblock
\showISSN{1089-7771}
\urldef\tempurl%
\url{https://doi.org/10.1109/TITB.2012.2206115}
\showDOI{\tempurl}


\end{thebibliography}

\begin{appendices}

\section{Proof of Theorem \ref{thm:stationarity}}
\label{thm:stationarityproof}

Starting with the simple case of $N=2$:

\begin{align}
\begin{split}\label{eq:1}
    E \left[ D_{t_1} D_{t_2} \right] ={}&  E [ \left( a_1 \cos(2 \pi f t_1 + \Theta)+ a_2\cos(4 \pi f t_1 + \Theta) \right) \times \\
         & \left( a_1 \cos(2 \pi f t_2 + \Theta) + a_2 \cos(4 \pi f t_2 + \Theta) \right) ]
\end{split}\\
\begin{split}\label{eq:expandedproduct}
     ={}& E \left[ a_1^2 \cos(2 \pi f t_1 + \Theta)\cos(2 \pi f t_2 + \Theta) \right]  + \\
        & E \left[ a_1 a_2 \cos(2 \pi f t_1 + \Theta) \cos(4 \pi f t_2 + \Theta) \right]  + \\
        & E \left[ a_1 a_2 \cos(4 \pi f t_1 + \Theta) \cos(2 \pi f t_2 + \Theta) \right]  + \\
        & E \left[ a_2^2   \cos(4 \pi f t_1 + \Theta) \cos(4 \pi f t_2 + \Theta) \right]
\end{split}\\
\begin{split}\label{eq:autocorrresult}
     ={}& \frac{a_1^2}{2} E \left[ \cos(2 \pi f (t_1-t_2) ) \right]  + \\
        & \frac{a_2^2}{2} E \left[ \cos(4 \pi f (t_1 - t_2) ) \right]
\end{split}
\end{align}

\noindent
The cross terms in Eq. \ref{eq:expandedproduct} evaluate to zero because the expectation is the inner product of two orthogonal sinusoids.
The terms involving $\Theta$ in Eq. \ref{eq:expandedproduct} evaluate to zero because we are taking the expectation over $\Theta$, which is assumed to be uniformly distributed on $[-\pi,\pi]$.
This leaves us with an autocorrelation function in Eq. \ref{eq:autocorrresult} that depends only on the difference $(t_1 - t_2)$, which is the criterion for stationary.
This same line of reasoning applies for $N>2$.
This result can also be ported to periodic signals by applying the appropriate limits.
For $N>2$, the same line of reasoning applies because expectations that do not involve $\Theta$ will be taken over triples of orthogonal sinusoids.

\section{Proof of Theorem \ref{thm:renyibound}}
\label{sec:renyisumproof}

\begin{lemma}
\label{lemma:renyisumcondition}

Let $D$ and $Z$ be random variables with alphabets $\mathcal{D}$ and $\mathcal{Z}$ respectively.
Let $X = D + Z$ over alphabet $\mathcal{X}$.
Then:

\begin{equation*}
    R(X|D) = R(Z|D)
\end{equation*}

\end{lemma}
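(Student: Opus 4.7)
The plan is to prove this by showing that the conditional distribution of $X$ given $D=d$ is just a translation of the conditional distribution of $Z$ given $D=d$, and that Rényi entropy is invariant under translation (more generally, under any bijection of the alphabet). Since the definition of Rényi entropy used in the paper only involves the squared probabilities $\sum_x p_X(x)^2$, all that matters is the multiset of probability values, which is preserved by any relabelling of outcomes.

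More concretely, I would first unfold the conditional Rényi entropy on the left-hand side one value of $D$ at a time, writing
\begin{equation*}
    R(X \mid D = d) = -\log_2 \sum_{x \in \mathcal{X}} p_{X \mid D}(x \mid d)^2,
\end{equation*}
and then use the defining relation $X = D + Z$ to observe that, conditional on $D = d$, the random variable $X$ is exactly $d + Z$. Hence for each $d \in \mathcal{D}$ and each $x \in \mathcal{X}$,
\begin{equation*}
    p_{X \mid D}(x \mid d) = p_{Z \mid D}(x - d \mid d).
\end{equation*}
Substituting into the sum and reindexing $z = x - d$ (which is a bijection on the alphabet) gives $\sum_x p_{X\mid D}(x\mid d)^2 = \sum_z p_{Z \mid D}(z \mid d)^2$, so $R(X \mid D = d) = R(Z \mid D = d)$ pointwise in $d$.

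The final step is to aggregate over $D$ using whichever averaging convention for conditional Rényi entropy is in force (the typical choice here being $R(X \mid D) = \mathbb{E}_D[R(X \mid D = d)]$, or equivalently $-\log_2 \mathbb{E}_D[\sum_x p_{X \mid D}(x \mid D)^2]$). Because the pointwise equality $R(X \mid D = d) = R(Z \mid D = d)$ holds for every $d$, any such aggregation yields $R(X \mid D) = R(Z \mid D)$.

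I do not expect a serious obstacle here; the whole argument hinges on the fact that addition by a constant is a bijection and Rényi entropy depends only on the multiset of probability masses. The one subtlety worth being explicit about is the reindexing step, since in general $x - d$ might range over a superset of $\mathcal{Z}$; this is harmless because $p_{Z \mid D}(z \mid d) = 0$ outside the support of $Z \mid D = d$, so extending the sum does not change its value. Once that bookkeeping is handled, the lemma follows immediately.
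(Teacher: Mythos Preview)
Your proposal is correct and follows essentially the same approach as the paper: expand $R(X\mid D)$ as $\sum_d P_D(d)\,R(X\mid D=d)$, use $P(X=x\mid D=d)=P(Z=x-d\mid D=d)$, reindex $z=x-d$, and conclude $R(X\mid D=d)=R(Z\mid D=d)$ for each $d$. Your write-up is in fact a bit more careful than the paper's (you make the translation-invariance rationale and the support bookkeeping explicit), but the underlying argument is identical.
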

\begin{proof}

\begin{align}
R(X|D) &= \sum_ {d \in\mathcal{D} }P_D(d)R(X|D=d) \\
&= \sum_ {d \in\mathcal{D}}P_D(d)(-\log\sum_{x \in\mathcal{X}}P(X=x|D=d)^2 \\
&= \sum_ {d \in\mathcal{D}}P_D(d)(-\log\sum_{x \in\mathcal{X}}P(Z=x-d|D=d)^2 \\
&= \sum_ {d \in\mathcal{D}}P_D(d)(-\log\sum_{z \in\mathcal{Z}}P(Z=z'|D=d)^2 \\
 &= R(Z|D)
\end{align}
\end{proof}

After observing $D$, the remaining R\'{e}nyi uncertainty in $X$ is due to $Z$.
Because conditioning reduces uncertainty:

\begin{equation*}
    R(X) \geq R(X|D) = R(Z|D) = R(Z)
\end{equation*}

\end{appendices}

\end{document}